\newenvironment{lyxlist}[1]
{\begin{list}{}
{\settowidth{\labelwidth}{#1}
 \setlength{\leftmargin}{\labelwidth}
 \addtolength{\leftmargin}{\labelsep}
 }}
{\end{list}}
\theoremstyle{plain}
\newtheorem{thm}{\protect\theoremname}
\newcounter{meg}
\newcommand{\R}{%
\refstepcounter{meg}%
{\vskip6pt \noindent \bf Remark \themeg.\ \ \ }%
}
\providecommand{\theoremname}{Theorem}
\begin{document}

\title{Operational understanding of the covariance of classical electrodynamics}

\author{Márton Gömöri%
\thanks{gomorim@gmail.com%
} {\normalsize ~~and}~~László E. Szabó\emph{\small }%
\thanks{leszabo@phil.elte.hu%
}\emph{\small }\\
\emph{\small ~}\\
\emph{\small Department of Logic, Institute of Philosophy}\\
\emph{\small Eötvös University, Budapest}}

\date{{\normalsize Draft of May 2013}}

\maketitle
\setlength{\unitlength}{1cm} 

\begin{picture}(0,2.5)(2,-10.5) \put(0,0){\begin{minipage}[t]{1\columnwidth} {\footnotesize  Published as }\emph{\footnotesize Physics Essays} {\bf\footnotesize 26 }{\footnotesize (2013) pp. 361--370. DOI: 10.4006/0836-1398-26.3.361} \end{minipage}}\end{picture}
\begin{abstract}
It is common in the literature on classical electrodynamics and relativity
theory that the transformation rules for the basic electrodynamic
quantities are derived from the pre\emph{-}assumption that the equations
of electrodynamics are covariant against these---unknown---transformation
rules. There are several problems to be raised concerning these derivations.
This is, however, not our main concern in this paper. Even if these
derivations are regarded as unquestionable, they leave open the following
fundamental question: Are the so-obtained transformation rules indeed
identical with the true transformation laws of the empirically ascertained
electrodynamic quantities? 

This is of course an empirical question. In this paper, we will answer
this question in a purely theoretical framework by applying what J.
S. Bell calls \textquotedblleft{}Lorentzian pedagogy\textquotedblright{}---according
to which the laws of physics in any one reference frame account for
all physical phenomena, including what a moving observer must see
when performs measurement operations with moving measuring devices.
We will show that the real transformation laws are indeed identical
with the ones obtained by presuming the covariance of the equations
of electrodynamics, and that the covariance is indeed satisfied. Beforehand,
however, we need to clarify the operational definitions of the fundamental
electrodynamic quantities. As we will see, these semantic issues are
not as trivial as one might think.
\end{abstract}
Key words: operationalism, covariance of classical electrodynamics,
empirical verification of the transformation rule, Maxwell--Lorentz
equations

\thispagestyle{empty}

\vfill{}

\newpage{}

\section{Introduction\label{sec:Introduction}}

Consider two inertial frames of reference $K$ and $K'$. Let $\left(x,y,z,t,\mathbf{E},\mathbf{B},\varrho,\mathbf{j}\right)$
denote the basic physical quantities involved in electrodynamics,
that is the space and time coordinates, the electric and magnetic
field strengths, and the source densities, obtainable by means of
measuring equipments \emph{co-moving with} $K$. Let $\left(x',y',z',t',\mathbf{E}',\mathbf{B}',\varrho',\mathbf{j}'\right)$
be the same quantities in $K'$, that is, the quantities obtainable
by means of the same operations with the same measuring equipments
when they are \emph{co-moving with} $K'$\emph{. }

By \emph{transformation law} we mean a one-to-one functional relation,
\[
T:\left(x,y,z,t,\mathbf{E},\mathbf{B},\varrho,\mathbf{j}\right)\mapsto\left(x',y',z',t',\mathbf{E}',\mathbf{B}',\varrho',\mathbf{j}'\right)=T\left(x,y,z,t,\mathbf{E},\mathbf{B},\varrho,\mathbf{j}\right)
\]
expressing the law-like regularity that if in an arbitrary space-time
point $A$ the $K$-quantities take values $\left(x(A),y(A),z(A),t(A),\mathbf{E}(A),\mathbf{B}(A),\varrho(A),\mathbf{j}(A)\right)$
then, in the same space-time point $A$, the corresponding $K'$-quantities
take values 
\begin{eqnarray}
\left(x'(A),y'(A),z'(A),t'(A),\mathbf{E}'(A),\mathbf{B}'(A),\varrho'(A),\mathbf{j}'(A)\right)\nonumber \\
=T\left(x(A),y(A),z(A),t(A),\mathbf{E}(A),\mathbf{B}(A),\varrho(A),\mathbf{j}(A)\right)\label{eq:transformation-def}
\end{eqnarray}
and vice versa. 

A system of equations is said to be \emph{$T$-covariant}, that is,
covariant against this transformation law, if expressing the variables
$\left(x,y,z,t,\mathbf{E},\mathbf{B},\varrho,\mathbf{j}\right)$ in
the equations by means of $\left(x',y',z',t',\mathbf{E}',\mathbf{B}',\varrho',\mathbf{j}'\right)$
we obtain a system of equations of exactly the same form in the primed
variables as the original one in the original variables.

One cannot a priori assume that there exists a transformation law
in the above sense; the fact that there is a law-like connection between
the quantities in $K$ and in $K'$ at all is a \emph{contingent}
fact of the physical world. In particular, as it turns out, the space-time
coordinates $\left(x'(A),y'(A),z'(A),t'(A)\right)$ are completely
determined by the space-time coordinates $\left(x(A),y(A),z(A),t(A)\right)$,
the field strengths $\left(\mathbf{E}'(A),\mathbf{B}'(A)\right)$
by the field strengths $\left(\mathbf{E}(A),\mathbf{B}(A)\right)$,
and the source densities $\left(\varrho'(A),\mathbf{j}'(A)\right)$
by the source densities $\left(\varrho(A),\mathbf{j}(A)\right)$,
separately. That is to say, the transformation law (\ref{eq:transformation-def})
consists of three maps:
\begin{eqnarray}
\left(x'(A),y'(A),z'(A),t'(A)\right) & = & T_{1}\left(x(A),y(A),z(A),t(A)\right)\label{eq:elso-trafo}\\
\left(\mathbf{E}'(A),\mathbf{B}'(A)\right) & = & T_{2}\left(\mathbf{E}(A),\mathbf{B}(A)\right)\label{eq:masodik-trafo}\\
\left(\varrho'(A),\mathbf{j}'(A)\right) & = & T_{3}\left(\varrho(A),\mathbf{j}(A)\right)\label{eq:hatmadik-trafo}
\end{eqnarray}

As to the space-time coordinates, we take it for granted that the
functional relation (\ref{eq:elso-trafo}) is the well-known Lorentz
transformation (see Appendix~1). However, the Lorentz transformation,
and the transformation laws of other kinematic quantities derived
from it, alone, does not determine the transformation law of the electrodynamic
quantities. In the literature on classical electrodynamics and relativity
theory the transformation laws $T_{2}$ and $T_{3}$ in (\ref{eq:elso-trafo})--(\ref{eq:hatmadik-trafo})
are derived from the \emph{additional assumption} that the equations
of electrodynamics are covariant against these transformation laws---in
conjunction with the Lorentz transformation $T_{1}$. Among those
with which we are acquainted, there are basically two major versions
of these derivations, which are briefly summarized in the Appendix~2.
There are several problems to be raised concerning these derivations,
and certain steps are questionable. This is however not our main concern
in this paper. For, even if these derivations are regarded as unquestionable,
they only prove what the transformation laws $T_{2}$ and $T_{3}$
\emph{should} look like in order that the equations of electrodynamics
constitute a covariant system of equations with respect to these transformations.
But they leave open the question whether the so-obtained transformations
are indeed identical with the \emph{true} transformation laws; whether
it is indeed the case that the values obtained from $\mathbf{E}(A),\mathbf{B}(A),\varrho(A),\mathbf{j}(A)$
by means of the transformation rules we derived are equal to the \emph{real}
$\mathbf{E}'(A),\mathbf{B}'(A),\varrho'(A),\mathbf{j}'(A)$, that
is, the quantities obtained by the same operations with the same measuring
equipments when they are co-moving with\emph{ $K'$,} in the same
space-time point $A$. The obvious problem is that there does not
exist, and, in fact, it is hard to imagine, an independent confirmation
of the covariance of the equations---against an unknown transformation
law. That is, to confirm that the equations of electrodynamics \emph{really}
satisfy the requirement of covariance, we need a primary knowledge
of the transformation laws.

It must be emphasized that the requirement of covariance, as a necessary%
\footnote{The proper relationship between the relativity principle and covariance
is a subtle issue even in the context of special relativity (Bell~1987;
Norton~1993; Gr\o n and V\o yenli 1999; Gömöri and Szabó~2011).%
} condition for satisfying the special relativity principle, does not
simply mean formulating the laws of electrodynamics in some invariant
mathematical form, for example, as Lorentz tensor equations; the equations
must be covariant against the \emph{real physical} transformation
laws. The same point is emphasized by Gr\o n and V\o yenli (1999,
p.~1731) in the context of the generalized principle of relativity: 
\begin{quote}
All quantities appearing in a covariant equation, must be defined
in the same way in every coordinate system, and interpreted physically
without reference to any preferred system. {[}$\ldots${]} A law fulfilling
the restricted covariance principle, has the same mathematical form
in every coordinate system, and it expresses a physical law that may
be formulated by the same words (without any change of meaning) in
every reference frame {[}$\ldots${]}
\end{quote}
In our understanding, the ``meanings of the words'' by which a physical
law is formulated are determined by the empirical/operational definitions
of the quantities appearing in the law. Our considerations, therefore,
will be based on the operational definitions of the electrodynamic
quantities---this is an essential feature of our approach.

Throughout the paper we use the traditional 3+1 vector-analytic formulation
of the laws of electrodynamics. The reason is that this formalism
is convenient for our main purpose: to ascertain the true transformation
laws of the electrodynamic quantities in any one empirically verifiable
form. This problem is epistemologically \emph{prior} to the problem
of the proper algebraic/geometric interpretation of these transformation
laws. For, once we \emph{know} these laws in any one available form,
we can think about the best mathematical representation of them. (For
a current discussion of the various mathematical formulations, see
Ivezi\'{c}~2001, 2003; Hestenes~1966, 2003; Huang~2008, 2009;
Arthur~2011.)

Thus, what are the true transformation laws of the fundamental electrodynamic
quantities? This is of course an empirical question, which we are
not able to answer in this paper. Instead, we investigate the following
\emph{theoretical} question: Are the transformation rules derived
from the pre\emph{-}assumption of the covariance of the laws of electrodynamics
\emph{consistent} with the laws of electrodynamics in any single frame
of reference? In other words:
\begin{lyxlist}{00.00.0000}
\item [{(Q)}] What do the transformation laws (\ref{eq:masodik-trafo})--(\ref{eq:hatmadik-trafo})
look like in the prediction of the laws of physics in any single frame
of reference?
\end{lyxlist}
The basic idea is what J. S. Bell (1987) calls ``Lorentzian pedagogy'',
according to which ``the laws of physics in any one reference frame
account for all physical phenomena, including the observations of
moving observers''. That is to say, the laws of physics that are
valid in any one reference frame, say $K$, must account for the behaviors
of the moving measuring equipments and the results of all measuring
operations; therefore, must provide an answer to question (Q). 

The answer can be given by the laws of physics only if the question
is properly formulated. We must clarify what measuring equipments
and etalons are used in the empirical definitions of the electrodynamic
quantities; and we must be able to tell when two measuring equipments
are the same, except that they are moving, as a whole, relative to
each other---one is at rest relative to $K$, the other is at rest
relative to $K'$. Similarly, we must be able to tell when two operational
procedures performed by the two observers are the ``same'', in spite
of the prima facie fact that the procedure performed in $K'$ obviously
differs from the one performed in $K$. In order to compare these
procedures, first of all, we must know what the procedures exactly
are. All in all, a correct answer to question (Q) can be given only
on the bases of \emph{a coherent system of precise operational definitions}
of the quantities in question; and all these definitions must be represented
in the language of electrodynamics in a single frame of reference.
Interestingly, there is no explicit discussion of these issues in
the standard literature on electrodynamics and special relativity;
although, as we will see, none of these issues are as trivial as one
might think. 

Thus, accordingly, in the first part of the paper we clarify the operational
definitions of the electrodynamic quantities and formulate what electrodynamics
in a single inertial frame of reference---let us call it ``rest''
frame---exactly asserts in terms of the quantities so defined. In
the second part, applying the ``Lorentzian pedagogy'', on the basis
of the laws of electrodynamics in the ``rest'' frame, we derive
what a moving observer must see in terms of the ``rest'' frame quantities
when repeats the same operational procedures in the ``moving'' frame.
In this way, we obtain the transformation laws of the electrodynamic
quantities; that is to say, we derive the transformation laws from
the precise operational definitions of the quantities and from the
laws of electrodynamics in a single inertial frame of reference, \emph{without
of the pre}-\emph{assumption that the equations are covariant} against
these transformation laws---by which we answer our question (Q).

\section{Operational definitions of electrodynamic quantities in $K$}

In this section we give the operational definitions of the fundamental
quantities of electrodynamics (ED) in a single reference frame $K$
and formulate a few basic observational facts about these quantities. 

The operational definition of a physical quantity requires the specification
of \emph{etalon} physical objects and standard physical processes
by means of which the value of the quantity is ascertained. In case
of electrodynamic quantities the only ``device'' we need is a point-like
test particle, and the standard measuring procedures by which the
kinematic properties of the test particle are ascertained. 

So, assume we have chosen an \emph{etalon} test particle, and let
$\mathbf{r}^{etalon}(t)$, $\mathbf{v}^{etalon}(t)$, $\mathbf{a}^{etalon}(t)$
denote its position, velocity and acceleration at time $t$. It is
assumed that we are able to set the \emph{etalon} test particle into
motion with arbitrary velocity $\mathbf{v}^{etalon}<c$ at arbitrary
location. We will need more ``copies'' of the \emph{etalon} test
particle:

\paragraph*{Definition~(D0)}

A particle $e$ is called \emph{test particle} if for all $\mathbf{r}$
and $t$
\begin{equation}
\mathbf{v}^{e}\left(t\right)\biggl|_{\mathbf{r}^{e}\left(t\right)=\mathbf{r}}=\mathbf{v}^{etalon}\left(t\right)\biggl|_{\mathbf{r}^{etalon}\left(t\right)=\mathbf{r}}
\end{equation}
implies
\begin{equation}
\mathbf{a}^{e}\left(t\right)\biggl|_{\mathbf{r}^{e}\left(t\right)=\mathbf{r}}=\mathbf{a}^{etalon}\left(t\right)\biggl|_{\mathbf{r}^{etalon}\left(t\right)=\mathbf{r}}
\end{equation}
(The ``restriction signs'' refer to \emph{physical} situations;
for example, $|_{\mathbf{r}^{e}\left(t\right)=\mathbf{r}}$ indicates
that the test particle $e$ is at point $\mathbf{r}$ at time $t$.)
\medskip{}

\noindent Note, that some of the definitions and statements below
require the existence of many test particles; which is, of course,
a matter of empirical fact, and will be provided by (E0) below.

First we define the electric and magnetic field strengths. The only
measuring device we need is a test particle being at rest relative
to $K$.

\paragraph*{Definition~(D1)}

\noindent \emph{Electric field strength} at point $\mathbf{r}$ and
time $t$ is defined as the acceleration of an arbitrary test particle
$e$, such that $\mathbf{r}^{e}(t)=\mathbf{r}$ and $\mathbf{v}^{e}(t)=0$:
\begin{equation}
\mathbf{E}\left(\mathbf{r},t\right)\overset{^{def}}{=}\left.\mathbf{a}^{e}(t)\right|_{\mathbf{r}^{e}(t)=\mathbf{r};\,\mathbf{v}^{e}(t)=0}\label{eq:E-def}
\end{equation}
\medskip{}

\noindent Magnetic field strength is defined by means of how the acceleration
$\mathbf{a}^{e}$ of the rest test particle changes with an infinitesimal
perturbation of its state of rest, that is, if an infinitesimally
small velocity $\mathbf{v}^{e}$ is imparted to the particle. Of course,
we cannot perform various small perturbations simultaneously on one
and the same rest test particle, therefore we perform the measurements
on many rest test particles with various small perturbations. Let
$\delta\subset\mathbb{R}^{3}$ be an arbitrary infinitesimal neighborhood
of $0\in\mathbb{R}^{3}$. First we define the following function:
\begin{eqnarray}
\mathbf{U}^{\mathbf{r},t} & : & \mathbb{R}^{3}\supset\delta\rightarrow\mathbb{R}^{3}\nonumber \\
 &  & \mathbf{U}^{\mathbf{r},t}(\mathbf{v})\overset{^{def}}{=}\left.\mathbf{a}^{e}(t)\right|_{\mathbf{r}^{e}(t)=\mathbf{r};\,\mathbf{v}^{e}(t)=\mathbf{v}}\label{eq:U-def}
\end{eqnarray}
Obviously, $\mathbf{U}^{\mathbf{r},t}(0)=\mathbf{E}\left(\mathbf{r},t\right)$.

\paragraph*{Definition~(D2)}

\noindent \emph{Magnetic field strength} at point $\mathbf{r}$ and
time $t$ is
\begin{equation}
\mathbf{B}(\mathbf{r},t)\overset{^{def}}{=}\left.\left(\begin{array}{c}
\partial_{v_{z}}U_{y}^{\mathbf{r},t}\\
\partial_{v_{x}}U_{z}^{\mathbf{r},t}\\
\partial_{v_{y}}U_{x}^{\mathbf{r},t}
\end{array}\right)\right|_{\mathbf{v}=0}\label{eq:B-def}
\end{equation}

\noindent \medskip{}
Practically it means that one can determine the value of $\mathbf{B}(\mathbf{r},t)$,
with arbitrary precision, by means of measuring the accelerations
of a few test particles of velocity $\mathbf{v}^{e}\in\delta$.

Next we introduce the concepts of source densities:

\paragraph*{Definition~(D3)}

\begin{eqnarray}
\varrho\left(\mathbf{r},t\right) & \overset{^{def}}{=} & \nabla\cdot\mathbf{E}\left(\mathbf{r},t\right)\label{eq:ME1}\\
\mathbf{j}\left(\mathbf{r},t\right) & \overset{^{def}}{=} & c^{2}\nabla\times\mathbf{B}\left(\mathbf{r},t\right)-\partial_{t}\mathbf{E}\left(\mathbf{r},t\right)\label{eq:ME2}
\end{eqnarray}
are called \emph{active electric charge density} and \emph{active
electric current density,} respectively. \medskip{}

A simple consequence of the \emph{definitions} is that a continuity
equation holds for $\varrho$ and $\mathbf{j}$:
\begin{thm}
~
\begin{equation}
\partial_{t}\varrho\left(\mathbf{r},t\right)+\nabla\cdot\mathbf{j}\left(\mathbf{r},t\right)=0\label{eq:kontinuitas}
\end{equation}

\end{thm}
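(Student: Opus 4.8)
The plan is to verify the identity by direct substitution of Definition~(D3) and then invoke two standard facts of vector calculus. First I would write $\partial_{t}\varrho=\partial_{t}\left(\nabla\cdot\mathbf{E}\right)$ and, assuming $\mathbf{E}$ is sufficiently smooth as a function of $\mathbf{r}$ and $t$ (which is implicit in the differentiations already performed in (D1)--(D3)), interchange the spatial divergence with the time derivative to get $\partial_{t}\varrho=\nabla\cdot\left(\partial_{t}\mathbf{E}\right)$. This is just Schwarz's theorem on mixed partials applied componentwise.

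Next I would compute $\nabla\cdot\mathbf{j}=\nabla\cdot\left(c^{2}\nabla\times\mathbf{B}-\partial_{t}\mathbf{E}\right)=c^{2}\,\nabla\cdot\left(\nabla\times\mathbf{B}\right)-\nabla\cdot\left(\partial_{t}\mathbf{E}\right)$, using linearity of the divergence. The key step here is the vector identity $\nabla\cdot\left(\nabla\times\mathbf{B}\right)=0$, valid for any $C^{2}$ vector field, which kills the first term and leaves $\nabla\cdot\mathbf{j}=-\nabla\cdot\left(\partial_{t}\mathbf{E}\right)$.

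Adding the two results then gives $\partial_{t}\varrho+\nabla\cdot\mathbf{j}=\nabla\cdot\left(\partial_{t}\mathbf{E}\right)-\nabla\cdot\left(\partial_{t}\mathbf{E}\right)=0$, which is the claim. There is essentially no obstacle: the statement is a formal consequence of the definitions, and the only thing worth flagging is the smoothness hypothesis needed to justify the interchange of derivatives and the divergence-of-curl identity. I would state this assumption once at the outset rather than belabor it, since the same regularity is already presupposed wherever $\partial_{t}\mathbf{E}$, $\nabla\times\mathbf{B}$ and $\nabla\cdot\mathbf{E}$ appear in (D2)--(D3).
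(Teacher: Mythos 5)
Your argument is correct and is precisely the ``simple consequence of the definitions'' the paper intends: substitute (D3), use $\nabla\cdot\left(\nabla\times\mathbf{B}\right)=0$ and the interchange of $\partial_{t}$ with $\nabla\cdot$, and the terms cancel. The paper leaves this proof implicit, so your write-up matches its approach and adds only the (reasonable) explicit remark about the smoothness needed for these manipulations.
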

\R\label{meg:2}In our construction, the two Maxwell equations (\ref{eq:ME1})--(\ref{eq:ME2}),
are mere \emph{definitions} of the concepts of active electric charge
density and\emph{ }active electric current density. They do not contain
information whatsoever about how ``matter produces electromagnetic
field''. And it is not because $\varrho\left(\mathbf{r},t\right)$
and $\mathbf{j}\left(\mathbf{r},t\right)$ are, of course, ``unspecified
distributions'' in these ``general laws'', but because $\varrho\left(\mathbf{r},t\right)$
and $\mathbf{j}\left(\mathbf{r},t\right)$ cannot be specified prior
to or at least independently of the field strengths $\mathbf{E}(\mathbf{r},t)$
and $\mathbf{B}(\mathbf{r},t)$. Again, because $\varrho\left(\mathbf{r},t\right)$
and $\mathbf{j}\left(\mathbf{r},t\right)$ are just abbreviations,
standing for the expressions on the right hand sides of (\ref{eq:ME1})--(\ref{eq:ME2}).
In other words, any statement about the ``charge distribution''
will be a statement about $\nabla\cdot\mathbf{E}$, and any statement
about the ``current distribution'' will be a statement about $c^{2}\nabla\times\mathbf{B}-\partial_{t}\mathbf{E}$.

The minimal claim is that this is a possible coherent construction.
Though we must add: equations (\ref{eq:ME1})--(\ref{eq:ME2}) could
be seen as contingent physical laws about the relationship between
the charge and current distributions and the electromagnetic field,
only if we had an independent empirical definition of charge. However,
we do not see how such a definition is possible, without encountering
circularities. (Also see Remark~\ref{meg:3}) \hfill{} $\lrcorner$\medskip{}

\noindent The operational definitions of the field strengths and the
source densities are based on the kinematic properties of the test
particles. The following definition describes the concept of a charged
point-like particle, in general.

\paragraph*{Definition (D4)}

A particle $b$ is called \emph{charged point-particle} of \emph{specific
passive electric charge} $\pi^{b}$ and of \emph{active electric charge}
$\alpha^{b}$ if the following is true:
\begin{enumerate}
\item It satisfies the relativistic Lorentz equation,

\noindent 
\begin{eqnarray}
\gamma\left(\mathbf{v}^{b}\left(t\right)\right)\mathbf{a}^{b}(t) & = & \pi^{b}\left\{ \mathbf{E}\left(\mathbf{r}^{b}\left(t\right),t\right)+\mathbf{v}^{b}\left(t\right)\times\mathbf{B}\left(\mathbf{r}^{b}\left(t\right),t\right)\right.\nonumber \\
 &  & \left.-c^{-2}\mathbf{v}^{b}\left(t\right)\left(\mathbf{v}^{b}\left(t\right)\mathbf{\cdot E}\left(\mathbf{r}^{b}\left(t\right),t\right)\right)\right\} \label{eq:E1'}
\end{eqnarray}

\item \noindent If it is the only particle whose worldline intersects a
given space-time region $\Lambda$, then for all $(\mathbf{r},t)\in\Lambda$
the source densities are of the following form: 
\begin{eqnarray}
\varrho\left(\mathbf{r},t\right) & = & \alpha^{b}\delta\left(\mathbf{r}-\mathbf{r}^{b}\left(t\right)\right)\label{eq:(E3)1}\\
\mathbf{j}\left(\mathbf{r},t\right) & = & \alpha^{b}\delta\left(\mathbf{r}-\mathbf{r}^{b}\left(t\right)\right)\mathbf{v}^{b}\left(t\right)\label{eq:(E3)2}
\end{eqnarray}

\end{enumerate}
where $\mathbf{r}^{b}\left(t\right)$, $\mathbf{v}^{b}\left(t\right)$
and $\mathbf{a}^{b}\left(t\right)$ are the particle's position, velocity
and acceleration. The ratio $\mu^{b}\overset{^{def}}{=}\alpha^{b}/\pi^{b}$
is called the \emph{electric inertial rest mass} of the particle.

\R\label{meg:3} Of course, \eqref{eq:E1'} is equivalent to the
standard form of the Lorentz equation: 
\begin{equation}
\frac{d}{dt}\left(\gamma\left(\mathbf{v}\left(t\right)\right)\mathbf{v}\left(t\right)\right)=\pi\left\{ \mathbf{E}\left(\mathbf{r}\left(t\right),t\right)+\mathbf{v}\left(t\right)\times\mathbf{B}\left(\mathbf{r}\left(t\right),t\right)\right\} 
\end{equation}
with $\pi=q/m$ in the usual terminology, where $q$ is the passive
electric charge and $m$ is the inertial (rest) mass of the particle---that
is why we call $\pi$ \emph{specific} passive electric charge. Nevertheless,
it must be clear that for all charged point-particles we introduced
\emph{two independent}, empirically meaningful and experimentally
testable quantities: specific passive electric charge $\pi$ and active
electric charge $\alpha$. There is no universal law-like relationship
between these two quantities: the ratio between them varies from particle
to particle. In the traditional sense, this ratio is, however, nothing
but the particle's rest mass.

We must emphasize that the concept of mass so obtained, as defined
by only means of electrodynamic quantities, is essentially related
to ED, that is to say, to electromagnetic interaction. There seems
no way to give a consistent and non-circular operational definition
of inertial mass in general, independently of the context of a particular
type of physical interaction. Without entering here into the detailed
discussion of the problem, we only mention that, for example, Weyl's
commonly accepted definition (Jammer 2000, pp. 8--10) and all similar
definitions based on the conservation of momentum in particle collisions
suffer from the following difficulty. There is no ``collision''
as a purely ``mechanical'' process. During a collision the particles
are moving in a physical field---or fields---of interaction. Therefore:
1)~the system of particles, separately, cannot be regarded as a closed
system;\emph{ }2)~the inertial properties of the particles, in fact,
reveal themselves in the interactions with the field.\emph{ }Thus,
the concepts of inertial rest mass belonging to different interactions
differ from each other; whether they are equal (proportional) to each
other is a matter of contingent fact of nature.\emph{ }\hfill{} $\lrcorner$\medskip{}

\R\label{meg:4}The choice of the \emph{etalon} test particle is,
of course, a matter of convention, just as the definitions (D0)--(D4)
themselves. It is important to note that all these conventional factors
play a constitutive role in the fundamental concepts of ED (Reichenbach
1965). With these choices we not only make semantic\emph{ }conventions
determining the meanings of the terms, but also make a decision about
the body of concepts by means of which we grasp physical reality.\emph{
}There are a few things, however, that must be pointed out:
\begin{lyxlist}{00.00.0000}
\item [{(a)}] This kind of conventionality does not mean that the physical
quantities defined in (D0)--(D4) cannot describe \emph{objective}
features of physical reality. It only means that we make a decision
which objective features of reality we are dealing with. With another
body of conventions we have another body of physical concepts/physical
quantities and another body of empirical facts.
\item [{(b)}] \noindent On the other hand, it does not mean either that
our knowledge of the physical world would not be objective but a product
of our conventions. If two theories obtained by starting with two
different bodies of conventions are complete enough accounts of the
physical phenomena, then they describe the same reality, expressed
in terms of different physical quantities. Let us spell out an example:
Definition \eqref{eq:ME2} is entirely conventional---no objective
fact of the world determines the formula on the right hand side. Therefore,
we could make another choice, say,
\begin{equation}
\mathbf{j}_{\Theta}\left(\mathbf{r},t\right)\overset{^{def}}{=}\Theta^{2}\nabla\times\mathbf{B}\left(\mathbf{r},t\right)-\partial_{t}\mathbf{E}\left(\mathbf{r},t\right)\label{eq:tetasje}
\end{equation}
with some $\Theta\neq c$. At first sight, one might think that this
choice will alter the speed of electromagnetic waves. This is however
not the case. It will be an empirical fact \emph{about} $\mathbf{j}_{\Theta}\left(\mathbf{r},t\right)$
that if a particle $b$ is the only one whose worldline intersects
a given space-time region $\Lambda$, then for all $(\mathbf{r},t)\in\Lambda$
\begin{eqnarray}
\mathbf{j}_{\Theta}\left(\mathbf{r},t\right) & = & \alpha^{b}\delta\left(\mathbf{r}-\mathbf{r}^{b}\left(t\right)\right)\mathbf{v}^{b}\left(t\right)\nonumber \\
 &  & +\left(\Theta^{2}-c^{2}\right)\nabla\times\mathbf{B}(\mathbf{r},t)\label{eq:jeteta}
\end{eqnarray}
Now, consider a region where there is no particle. Taking into account
\eqref{eq:jeteta}, we have \eqref{eq:ME3}--\eqref{eq:ME4} and 
\begin{eqnarray}
\nabla\cdot\mathbf{E}(\mathbf{r},t) & = & 0\\
\Theta^{2}\nabla\times\mathbf{B}\left(\mathbf{r},t\right)-\partial_{t}\mathbf{E}\left(\mathbf{r},t\right) & = & \left(\Theta^{2}-c^{2}\right)\nabla\times\mathbf{B}(\mathbf{r},t)
\end{eqnarray}
which lead to the usual wave equation with propagation speed $c$.
(Of course, in this particular example, one of the possible choices,
namely $\Theta=c$, is distinguished by its simplicity. Note, however,
that simplicity is not an epistemologically interpretable notion.)\hfill{}
$\lrcorner$
\end{lyxlist}

\section{Empirical facts of electrodynamics}

Both ``empirical'' and ``fact'' are used in different senses.
Statements (E0)--(E4) below are universal generalizations, rather
than statements of particular observations. Nevertheless we call them
``empirical facts'', by which we simply mean that they are truths
which can be acquired by \emph{a posteriori} means. Normally, they
can be considered as laws obtained by inductive generalization; statements
the truths of which can be, in principle, confirmed empirically. 

On the other hand, in our context, it is not important how these statements
are empirically confirmed. (E0)--(E4) can be regarded as axioms of
the Maxwell--Lorentz theory in $K$. What is important for us is that
from these \emph{axioms}, in conjunction with the theoretical representations
of the measurement operations, there follow assertions about what
the moving observer in $K'$ observes. Section~\ref{sec:Observations-of-moving}
will be concerned with these consequences.

\paragraph*{(E0)}

There exist many enough test particles and we can settle them into
all required positions and velocities.

\medskip{}

\noindent Consequently, (D1)--(D4) are sound definitions. From observations
about $\mathbf{E}$, $\mathbf{B}$ and the charged point-particles,
we have further empirical facts:

\paragraph*{(E1)}

In all situations, the electric and magnetic field strengths satisfy
the following two Maxwell equations:

\noindent 
\begin{eqnarray}
\nabla\cdot\mathbf{B}\left(\mathbf{r},t\right) & = & 0\label{eq:ME3}\\
\nabla\times\mathbf{E}\left(\mathbf{r},t\right)+\partial_{t}\mathbf{B}\left(\mathbf{r},t\right) & = & 0\label{eq:ME4}
\end{eqnarray}

\paragraph*{(E2)}

Each particle is a charged point-particle, satisfying (D4) with some
specific passive electric charge $\pi$ and active electric charge
$\alpha$. This is also true for the test particles, with---as follows
from the definitions---specific passive electric charge $\pi=1$.%
\footnote{We take it true that the relativistic Lorentz equation is empirically
confirmed. (Cf. Huang 1993) %
}

\paragraph*{(E3)}

If $b_{1}$, $b_{2}$,..., $b_{n}$ are the only particles whose worldlines
intersect a given space-time region $\Lambda$, then for all $(\mathbf{r},t)\in\Lambda$
the source densities are:
\begin{eqnarray}
\varrho\left(\mathbf{r},t\right) & = & \sum_{i=1}^{n}\alpha^{b_{i}}\delta\left(\mathbf{r}-\mathbf{r}^{b_{i}}\left(t\right)\right)\label{eq:add1}\\
\mathbf{j}\left(\mathbf{r},t\right) & = & \sum_{i=1}^{n}\alpha^{b_{i}}\delta\left(\mathbf{r}-\mathbf{r}^{b_{i}}\left(t\right)\right)\mathbf{v}^{b_{i}}\left(t\right)\label{eq:add2}
\end{eqnarray}
\medskip{}

Putting facts (E1)--(E3) together, we have the coupled Maxwell--Lorentz
equations:
\begin{eqnarray}
\nabla\cdot\mathbf{E}\left(\mathbf{r},t\right) & = & \sum_{i=1}^{n}\alpha^{b_{i}}\delta\left(\mathbf{r}-\mathbf{r}^{b_{i}}\left(t\right)\right)\label{eq:MLE1}\\
c^{2}\nabla\times\mathbf{B}\left(\mathbf{r},t\right)-\partial_{t}\mathbf{E}\left(\mathbf{r},t\right) & = & \sum_{i=1}^{n}\alpha^{b_{i}}\delta\left(\mathbf{r}-\mathbf{r}^{b_{i}}\left(t\right)\right)\mathbf{v}^{b_{i}}\left(t\right)\label{eq:MLE2}\\
\nabla\cdot\mathbf{B}\left(\mathbf{r},t\right) & = & 0\label{eq:MLE3}\\
\nabla\times\mathbf{E}\left(\mathbf{r},t\right)+\partial_{t}\mathbf{B}\left(\mathbf{r},t\right) & = & 0\label{eq:MLE4}\\
\gamma\left(\mathbf{v}^{b_{i}}\left(t\right)\right)\mathbf{a}^{b_{i}}(t) & = & \pi^{b_{i}}\left\{ \mathbf{E}\left(\mathbf{r}^{b_{i}}\left(t\right),t\right)+\mathbf{v}^{b_{i}}\left(t\right)\times\mathbf{B}\left(\mathbf{r}^{b_{i}}\left(t\right),t\right)\right.\,\,\,\nonumber \\
 &  & \left.-c^{-2}\mathbf{v}^{b_{i}}\left(t\right)\left(\mathbf{v}^{b_{i}}\left(t\right)\mathbf{\cdot E}\left(\mathbf{r}^{b_{i}}\left(t\right),t\right)\right)\right\} \label{eq:MLE5}\\
 &  & \,\,\,\,\,\,\,\,\,\,\,\,\,\,\,\,\,\,\,\,\,\,\,\,\,\,\,\,\,(i=1,2,\ldots n)\nonumber 
\end{eqnarray}

\noindent These are the fundamental equations of ED, describing an
interacting system of $n$ particles and the electromagnetic field. 

\R\label{meg:3a} Without entering into the details of the problem
of classical charged particles (Frisch 2005; Rohrlich 2007; Muller
2007), it must be noted that the Maxwell--Lorentz equations (\ref{eq:MLE1})--(\ref{eq:MLE5}),
exactly in this form, have \emph{no} solution. The reason is the following.
In the Lorentz equation of motion (\ref{eq:E1'}), a small but extended
particle can be described with a good approximation by one single
specific passive electric charge $\pi^{b}$ and one single trajectory
$\mathbf{r}^{b}\left(t\right)$. In contrast, however, a similar ``idealization''
in the source densities (\ref{eq:(E3)1})--(\ref{eq:(E3)2}) leads
to singularities; the field is singular at precisely the points where
the coupling happens: on the trajectory of the particle. 

The generally accepted answer to this problem is that (\ref{eq:(E3)1})--(\ref{eq:(E3)2})
should not be taken literally. Due to the inner structure of the particle,
the real source densities are some ``smoothed out'' Dirac deltas.
Instead of (\ref{eq:(E3)1})--(\ref{eq:(E3)2}), therefore, we have
some more general equations
\begin{eqnarray}
\left[\varrho(\mathbf{r},t)\right] & = & \mathcal{R}^{b}\left[\mathbf{r}^{b}(t)\right]\label{eq:struktura1}\\
\left[\mathbf{j}(\mathbf{r},t)\right] & = & \mathcal{J}^{b}\left[\mathbf{r}^{b}(t)\right]\label{eq:struktura2}
\end{eqnarray}
where $\mathcal{R}^{b}$ and $\mathcal{J}^{b}$ are, generally non-linear,
operators providing functional relationships between the particle's
trajectory $\left[\mathbf{r}^{b}(t)\right]$ and the source density
functions $\left[\varrho(\mathbf{r},t)\right]$ and $\left[\mathbf{j}(\mathbf{r},t)\right]$.
(Notice that (\ref{eq:(E3)1})--(\ref{eq:(E3)2}) serve as example
of such equations.) The concrete forms of equations (\ref{eq:struktura1})--(\ref{eq:struktura2})
are determined by the physical laws of the internal world of the particle---which
are, supposedly, outside of the scope of ED. At this level of generality,
the only thing we can say is that, for a ``point-like'' (localized)
particle, equations (\ref{eq:struktura1})--(\ref{eq:struktura2})
must be something very close to---but not identical with---equations
(\ref{eq:(E3)1})--(\ref{eq:(E3)2}). With this explanation, for the
sake of simplicity we leave the Dirac deltas in the equations. Also,
in some of our statements and calculations the Dirac deltas are essentially
used; for example, (E3) and, partly, Theorem~\ref{thm:-E2'} and
\ref{thm:superpos'} would not be true without the exact point-like
source densities (\ref{eq:(E3)1})--(\ref{eq:(E3)2}). But a little
reflection shows that the statements in question remain approximately
true if the particles are approximately point-like, that is, if equations
(\ref{eq:struktura1})--(\ref{eq:struktura2}) are close enough to
equations (\ref{eq:(E3)1})--(\ref{eq:(E3)2}). To be noted that what
is actually essential in (\ref{eq:(E3)1})--(\ref{eq:(E3)2}) is not
the point-likeness of the particle, but its stability: no matter how
the system moves, it remains a localized object. \hfill{} $\lrcorner$

\section{Operational definitions of electrodynamic quantities in $K'$}

So far we have only considered ED in a single frame of reference $K$.
Now we turn to the question of how a moving observer describes the
same phenomena in $K'$. The observed phenomena are the same, but
the measuring equipments by means of which the phenomena are observed
are not entirely the same; instead of being at rest in $K$, they
are co-moving with\emph{ $K'$}.

Accordingly, we will repeat the operational definitions (D0)--(D4)
with the following differences:
\begin{enumerate}
\item The ``rest test particles'' will be at rest relative to reference
frame $K'$, that is, \emph{in motion with velocity $\mathbf{V}$}
relative to $K$.
\item The measuring equipments by means of which the kinematic quantities
are ascertained---say, the measuring rods and clocks---will be at
rest relative to $K'$, that is, \emph{in motion with velocity $\mathbf{V}$}
relative to $K$. In other words, the kinematic quantities $t,\mathbf{r},\mathbf{v},\mathbf{a}$
in definitions (D0)--(D4) will be \emph{replaced with}---not expressed
in terms of--- $t',\mathbf{r}',\mathbf{v}',\mathbf{a}'$.
\end{enumerate}

\paragraph*{Definition~(D0')}

Particle $e$ is called \emph{(test particle)'} if for all $\mathbf{r}'$
and $t'$
\begin{equation}
\mathbf{v}'^{e}\left(t'\right)\biggl|_{\mathbf{r}'^{e}\left(t'\right)=\mathbf{r}'}=\mathbf{v}'^{etalon}\left(t'\right)\biggl|_{\mathbf{r}'^{etalon}\left(t'\right)=\mathbf{r}'}
\end{equation}
implies
\begin{equation}
\mathbf{a}'^{e}\left(t'\right)\biggl|_{\mathbf{r}'^{e}\left(t'\right)=\mathbf{r}'}=\mathbf{a}'^{etalon}\left(t'\right)\biggl|_{\mathbf{r}'^{etalon}\left(t'\right)=\mathbf{r}'}
\end{equation}
\medskip{}

\noindent A (test particle)' $e$ moving with velocity $\mathbf{V}$
relative to $K$ is at rest relative to $K'$, that is, $\mathbf{v}'^{e}=0$.
Accordingly:

\paragraph*{Definition~(D1')}

\noindent \emph{(Electric field strength)'} at point $\mathbf{r}'$
and time $t'$ is defined as the acceleration of an arbitrary (test
particle)' $e$, such that $\mathbf{r}'^{e}(t)=\mathbf{r}'$ and $\mathbf{v}'^{e}(t')=0$:
\begin{equation}
\mathbf{E}'\left(\mathbf{r}',t'\right)\overset{^{def}}{=}\left.\mathbf{a}'^{e}(t')\right|_{\mathbf{r}'^{e}(t')=\mathbf{r}';\,\mathbf{v}'^{e}(t')=0}\label{eq:E-def-vesszo}
\end{equation}
\medskip{}

\noindent Similarly, (magnetic field strength)' is defined by means
of how the acceleration $\mathbf{a}'^{e}$ of a rest (test particle)'---rest,
of course, relative to $K'$---changes with a small perturbation of
its state of motion, that is, if an infinitesimally small velocity
$\mathbf{v}'^{e}$ is imparted to the particle. Just as in (D2), let
$\delta'\subset\mathbb{R}^{3}$ be an arbitrary infinitesimal neighborhood
of $0\in\mathbb{R}^{3}$. We define the following function:
\begin{eqnarray}
\mathbf{U}'^{\mathbf{r}',t'} & : & \mathbb{R}^{3}\supset\delta'\rightarrow\mathbb{R}^{3}\nonumber \\
 &  & \mathbf{U}'^{\mathbf{r}',t'}(\mathbf{v}')\overset{^{def}}{=}\left.\mathbf{a}'^{e}(t')\right|_{\mathbf{r}'^{e}(t')=\mathbf{r}';\,\mathbf{v}'^{e}(t')=\mathbf{v}'}\label{eq:U-def-vesszo}
\end{eqnarray}

\paragraph*{Definition~(D2')}

\noindent \emph{(Magnetic field strength)'} at point $\mathbf{r}'$
and time $t'$ is

\begin{equation}
\mathbf{B}'(\mathbf{r}',t')\overset{^{def}}{=}\left.\left(\begin{array}{l}
\partial_{v'_{z}}U'{}_{y}^{\mathbf{r}',t'}\\
\partial_{v'_{x}}U'{}_{z}^{\mathbf{r}',t'}\\
\partial_{v'_{y}}U'{}_{x}^{\mathbf{r}',t'}
\end{array}\right)\right|_{\mathbf{v}'=0}\label{eq:B-def-vesszo}
\end{equation}

\paragraph*{Definition~(D3')}

\begin{eqnarray}
\varrho'\left(\mathbf{r}',t'\right) & \overset{^{def}}{=} & \nabla\cdot\mathbf{E}'\left(\mathbf{r}',t'\right)\label{eq:ME1'}\\
\mathbf{j}'\left(\mathbf{r}',t'\right) & \overset{^{def}}{=} & c^{2}\nabla\times\mathbf{B}'\left(\mathbf{r}',t'\right)-\partial_{t'}\mathbf{E}'\left(\mathbf{r}',t'\right)\label{eq:ME2'}
\end{eqnarray}
are called \emph{(active electric charge density)'} and \emph{(active
electric current density)',} respectively. \medskip{}

\noindent Of course, we have:
\begin{thm}
~
\begin{equation}
\partial_{t'}\varrho'\left(\mathbf{r}',t'\right)+\nabla\cdot\mathbf{j}'\left(\mathbf{r}',t'\right)=0\label{eq:kontinuitas'}
\end{equation}

\end{thm}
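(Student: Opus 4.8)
The plan is to mimic exactly the proof of the corresponding unprimed statement, i.e.\ the continuity equation~(\ref{eq:kontinuitas}), since the primed source densities are introduced by (D3') through the very same formulas that (D3) uses in the unprimed case. Nothing about the relation between $K$ and $K'$ enters: the identity is purely formal, a consequence of the definitions (\ref{eq:ME1'})--(\ref{eq:ME2'}) together with two elementary facts of vector calculus.

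Concretely, I would substitute the definitions (\ref{eq:ME1'}) and (\ref{eq:ME2'}) straight into the left-hand side and expand:
\[
\partial_{t'}\varrho'\left(\mathbf{r}',t'\right)+\nabla\cdot\mathbf{j}'\left(\mathbf{r}',t'\right)
=\partial_{t'}\bigl(\nabla\cdot\mathbf{E}'\bigr)
+c^{2}\,\nabla\cdot\bigl(\nabla\times\mathbf{B}'\bigr)
-\nabla\cdot\bigl(\partial_{t'}\mathbf{E}'\bigr).
\]
In the last term I would commute the spatial divergence with $\partial_{t'}$, so that it cancels the first term; and the middle term vanishes identically because the divergence of a curl is zero, $\nabla\cdot(\nabla\times\mathbf{B}')\equiv0$. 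What remains is $0$, which is the claim.

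The only points requiring (trivial) justification are the interchange of $\partial_{t'}$ with $\nabla\cdot$ and the identity $\nabla\cdot(\nabla\times\mathbf{B}')\equiv0$; both hold as soon as $\mathbf{E}'$ and $\mathbf{B}'$ are differentiable enough for the operations in (D3') to be defined, which is already presupposed in writing those definitions down. Hence there is no genuine obstacle---this is why the statement is prefaced by ``of course'' in the text. It is worth noting, however, that the argument uses \emph{only} the definitions (D3'): it invokes neither the primed analogues of the empirical facts (E1)--(E3) nor any transformation law linking primed and unprimed quantities, and it would go through verbatim if $c$ were replaced by an arbitrary constant $\Theta$ as in (\ref{eq:tetasje}).
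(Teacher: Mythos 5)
Your proof is correct and is exactly the argument the paper intends: the theorem is prefaced by ``Of course'' precisely because, just as with the unprimed continuity equation (\ref{eq:kontinuitas}), it is an immediate formal consequence of the definitions (\ref{eq:ME1'})--(\ref{eq:ME2'}) via $\nabla\cdot(\nabla\times\mathbf{B}')\equiv 0$ and the interchange of $\partial_{t'}$ with $\nabla\cdot$. Your closing observation---that no transformation law or empirical fact enters---is also accurate and consistent with the paper's Remark on the purely definitional status of (\ref{eq:ME1})--(\ref{eq:ME2}).
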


\paragraph*{Definition (D4')}

A particle is called \emph{(charged point-particle)'} of \emph{(specific
passive electric charge)'} $\pi'^{b}$ and of \emph{(active electric
charge)'} $\alpha'^{b}$ if the following is true:
\begin{enumerate}
\item It satisfies the relativistic Lorentz equation,

\noindent 
\begin{eqnarray}
\gamma\left(\mathbf{v}'^{b}\left(t'\right)\right)\mathbf{a}'^{b}(t') & = & \pi'^{b}\left\{ \mathbf{E}'\left(\mathbf{r}'^{b}\left(t'\right),t'\right)+\mathbf{v}'^{b}\left(t'\right)\times\mathbf{B}'\left(\mathbf{r}'^{b}\left(t'\right),t'\right)\right.\nonumber \\
 &  & \left.-c^{-2}\mathbf{v}'^{b}\left(t'\right)\left(\mathbf{v}'^{b}\left(t'\right)\mathbf{\cdot E}'\left(\mathbf{r}'^{b}\left(t'\right),t'\right)\right)\right\} \label{eq:E1'-vesszo}
\end{eqnarray}

\item \noindent If it is the only particle whose worldline intersects a
given space-time region $\Lambda'$, then for all $(\mathbf{r}',t')\in\Lambda'$
the (source densities)' are of the following form: 
\begin{eqnarray}
\varrho'\left(\mathbf{r}',t'\right) & = & \alpha'^{b}\delta\left(\mathbf{r}'-\mathbf{r}'^{b}\left(t'\right)\right)\label{eq:(E3)1-vesszo}\\
\mathbf{j}'\left(\mathbf{r}',t'\right) & = & \alpha'^{b}\delta\left(\mathbf{r}'-\mathbf{r}'^{b}\left(t'\right)\right)\mathbf{v}'^{b}\left(t'\right)\label{eq:(E3)2-vesszo}
\end{eqnarray}

\end{enumerate}
where $\mathbf{r}'^{b}\left(t'\right)$, $\mathbf{v}'^{b}\left(t'\right)$
and $\mathbf{a}'^{b}\left(t'\right)$ is the particle's position,
velocity and acceleration in $K'$. The ratio $\mu'^{b}\overset{^{def}}{=}\alpha'^{b}/\pi'^{b}$
is called the \emph{(electric inertial rest mass)'} of the particle.\medskip{}

\R\label{meg:5}It is worthwhile to make a few remarks about some
epistemological issues:
\begin{lyxlist}{00.00.0000}
\item [{(a)}] The physical quantities defined in (D1)--(D4) \emph{differ}
from the physical quantities defined in (D1')--(D4'), simply because
the physical situation in which a test particle is at rest relative
to $K$ differs from the one in which it is co-moving with $K'$ with
velocity $\mathbf{V}$ relative to $K$; and, as we know \emph{from
the laws of ED in $K$}, this difference really matters.

Someone might object that if this is so then any two instances of
the same measurement must be regarded as measurements of different
physical quantities. For, if the difference in the test particle's
velocity is enough reason to say that the two operations determine
two different quantities, then, by the same token, two operations
must be regarded as different operations---and the corresponding quantities
as different physical quantities---if the test particle is at different
points of space, or the operations simply happen at different moments
of time. And this consequence, the objection goes, seems to be absurd:
if it were true, then science would not be possible, because we would
not have the power to make law-like assertions at all; therefore we
must admit that empiricism fails to explain how natural laws are possible,
and, as many argue, science cannot do without metaphysical pre-assumptions.

Our response to such an objections is the following. First, concerning
the general epistemological issue, we believe, nothing disastrous
follows from admitting that two phenomena observed at different place
or at different time \emph{are} distinct. And if they are stated as
instances of the same phenomenon, this statement is not a logical
or metaphysical necessity---derived from some logical/metaphysical
pre-assumptions---but an ordinary scientific hypothesis obtained by
induction and confirmed or disconfirmed together with the \emph{whole}
scientific theory. In fact, this is precisely the case with respect
to the definitions of the fundamental electrodynamic quantities. For
example, definition (D1) is in fact a family of definitions each belonging
to a particular situation individuated by the space-time locus $(\mathbf{r},t)$.\\
Second, the question of operational definitions of electrodynamic
quantities first of all emerges not from an epistemological context,
but from the context of a purely theoretical problem: what do the
laws of physics in $K$ say about question (Q)? In the next section,
all the results of the measurement operations defined in (D1')--(D4')
will be predicted from the laws of ED in $K$. And, ED itself says
that some differences in the conditions are relevant from the point
of view of the measured accelerations of the test particles, some
others are not; some of the originally distinct quantities are contingently
equal, some others not.

\item [{(b)}] From a mathematical point of view, both (D0)--(D4) and (D0')--(D4')
are definitions. However, while the choice of the \emph{etalon} test
particle and definitions (D0)--(D4) are entirely \emph{conventional},
there is no additional conventionality in (D0')--(D4'). The way in
which we define the electrodynamic quantities in inertial frame $K'$
automatically follows from (D0)--(D4) and from the question (Q) we
would like to answer; since the question is about the ``quantities
obtained by the same operational procedures with the same measuring
equipments when they are co-moving with\emph{ }$K'$''.\emph{ }
\item [{(c)}] In fact, one of the constituents of the concepts defined
in $K'$ is not determined by the operational definitions in $K$.
Namely, the notion of ``the same operational procedures with the
same measuring equipments when they are co-moving with\emph{ $K'$}''.
This is however not an additional freedom of conventionality, but
a simple vagueness in our physical theories in $K$: the vagueness
of the general concept of ``the same system in the same situation,
except that it is, as a whole, in a collective motion with velocity
$\mathbf{V}$ relative to $K$, that is, co-moving with reference
frame $K'$'' (Szabó~2004; Gömöri and Szabó 2011). In any event,
in our case, the notion of the only moving measuring device, that
is, the notion of ``a test particle at rest relative to $K'$''
is quite clear.\hfill{} $\lrcorner$
\end{lyxlist}

\section{Observations of moving observer\label{sec:Observations-of-moving}}

Now we have another collection of operationally defined notions, $\mathbf{E}',\mathbf{B}',$$\varrho',\mathbf{j}'$,
the concept of (charged point-particle)' defined in the primed terms,
and its properties $\pi',\alpha'$ and $\mu'$. Normally, one should
investigate these quantities experimentally and collect new empirical
facts about both the relationships between the primed quantities and
about the relationships between the primed quantities and the ones
defined in (D1)--(D4). In contrast, we will continue our analysis
in another way; following the ``Lorentzian pedagogy'', we will determine
from the laws of physics in $K$ what an observer co-moving with $K'$
should observe. In fact, with this method, we will answer our question
(Q), on the basis of the laws of ED in one single frame of reference.
We will also see whether the basic equations \eqref{eq:MLE1}--\eqref{eq:MLE5}
are covariant against these transformations.

Throughout the theorems below, it is important that when we compare,
for example, $\mathbf{E}\left(\mathbf{r},t\right)$ with $\mathbf{E}'(\mathbf{r}',t')$,
we compare the values of the fields \emph{in one and the same event},
that is, we compare $\mathbf{E}\left(\mathbf{r}(A),t(A)\right)$ with
$\mathbf{E}'\left(\mathbf{r}'(A),t'(A)\right)$. For the sake of brevity,
however, we omit the indication of this fact.

The first theorem trivially follows from the fact that the Lorentz
transformations of the kinematic quantities are one-to-one: 
\begin{thm}
\label{thm:test=00003Dtest'}A particle is a (test particle)' if and
only if it is a test particle.
\end{thm}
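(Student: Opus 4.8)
The plan is to reduce the statement to the one-to-one nature of the kinematic transformations. The definition (D0) characterizes a test particle (in $K$) by the property that, whenever its instantaneous position and velocity coincide with those of the \emph{etalon} particle, its acceleration does too. Definition (D0') is the verbatim analogue but with every kinematic quantity $t,\mathbf{r},\mathbf{v},\mathbf{a}$ replaced by its primed counterpart $t',\mathbf{r}',\mathbf{v}',\mathbf{a}'$ — these primed quantities being the ones ascertained by rods and clocks co-moving with $K'$. Since the primed \emph{etalon} is \emph{the same physical object} as the unprimed \emph{etalon} (merely described in $K'$-terms), the two definitions are two descriptions of one and the same condition on a worldline, provided the passage between $(t,\mathbf{r},\mathbf{v},\mathbf{a})$ and $(t',\mathbf{r}',\mathbf{v}',\mathbf{a}')$ along any worldline is an invertible, event-by-event correspondence.

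First I would invoke the Lorentz transformation $T_{1}$ of (\ref{eq:elso-trafo}), together with the induced transformations of velocity and acceleration derived from it (Appendix~1), and note that these maps are one-to-one on the relevant domains (velocities $<c$). Hence for any particle and any event $A$ on its worldline, the data $(\mathbf{r}(A),\mathbf{v}(A),\mathbf{a}(A))$ determine and are determined by $(\mathbf{r}'(A),\mathbf{v}'(A),\mathbf{a}'(A))$; and two worldlines that agree in position and velocity at a common event in $K$ also agree in position and velocity at that event in $K'$, and conversely. Second I would apply this to the pair consisting of a candidate particle $e$ and the \emph{etalon}: the hypothesis ``$\mathbf{v}^{e}=\mathbf{v}^{etalon}$ whenever $\mathbf{r}^{e}=\mathbf{r}^{etalon}$, at the same time'' is equivalent, under $T_{1}$, to ``$\mathbf{v}'^{e}=\mathbf{v}'^{etalon}$ whenever $\mathbf{r}'^{e}=\mathbf{r}'^{etalon}$, at the same time'', and likewise the conclusion about accelerations translates into its primed form, because the acceleration transformation at $A$ depends only on the velocity at $A$ (and is invertible). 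Therefore $e$ satisfies (D0) if and only if it satisfies (D0'), which is exactly the claim.

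The only point needing a little care — and the main (mild) obstacle — is the matching of the ``restriction signs'': one must check that the physical situation picked out by $|_{\mathbf{r}^{e}(t)=\mathbf{r},\,\mathbf{v}^{e}(t)=\mathbf{v}}$ in $K$ is the \emph{same} situation as the one picked out by the corresponding primed restriction in $K'$, i.e. that ``same event, same position, same velocity'' is a frame-independent specification. This is immediate once one observes that the velocity-addition law sends $\mathbf{v}=0$-type constraints to definite $\mathbf{v}'$-constraints bijectively and that events are frame-independent; there is no genuine difficulty, only the need to state that the universal quantifier ``for all $\mathbf{r}$ and $t$'' in (D0) ranges over the same set of physical situations as ``for all $\mathbf{r}'$ and $t'$'' in (D0'). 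With that remark in place the equivalence is a one-line consequence of invertibility, which is presumably why the authors call it ``trivial''.
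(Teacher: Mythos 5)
Your proposal is correct and takes essentially the same approach as the paper, which disposes of this theorem with the single remark that it ``trivially follows from the fact that the Lorentz transformations of the kinematic quantities are one-to-one''---precisely the event-by-event invertibility argument you spell out. Your additional care about matching the restriction signs and the ranges of the quantifiers only makes explicit what the authors leave implicit.
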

\noindent Consequently, we have many enough (test particles)' for
definitions (D1')--(D4'); and each is a charged point-particle satisfying
the Lorentz equation \eqref{eq:E1'} with specific passive electric
charge $\pi=1$. 
\begin{thm}
~\label{thm:E'E}
\begin{eqnarray}
E'_{x} & = & E_{x}\label{eq:E'x}\\
E'_{y} & = & \gamma\left(E_{y}-VB_{z}\right)\label{eq:E'y}\\
E'_{z} & = & \gamma\left(E_{z}+VB_{y}\right)\label{eq:E'z}
\end{eqnarray}
\end{thm}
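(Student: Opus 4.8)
The plan is to read $\mathbf{E}'$ straight off its operational definition (D1') and feed into it the dynamics of a test particle as known in $K$. By (D1'), $\mathbf{E}'$ at an event $A$ is the $K'$-acceleration $\mathbf{a}'^{e}$ of a (test particle)' passing through $A$ at rest relative to $K'$; such a particle moves with the boost velocity $\mathbf{V}=(V,0,0)$ relative to $K$ (Appendix~1). By Theorem~\ref{thm:test=00003Dtest'} it is also a test particle, hence by (E2) a charged point-particle with $\pi=1$, so in $K$ it obeys the Lorentz equation~\eqref{eq:E1'} at $A$. First I would substitute $\mathbf{v}^{e}=\mathbf{V}$ there, using $\mathbf{V}\times\mathbf{B}=(0,-VB_{z},VB_{y})$ and $\mathbf{V}\left(\mathbf{V}\cdot\mathbf{E}\right)=(V^{2}E_{x},0,0)$, and solve for the $K$-acceleration:
\begin{equation}
a^{e}_{x}=\frac{E_{x}}{\gamma^{3}},\qquad a^{e}_{y}=\frac{E_{y}-VB_{z}}{\gamma},\qquad a^{e}_{z}=\frac{E_{z}+VB_{y}}{\gamma},
\end{equation}
where $\gamma=\gamma(V)$, the $x$-line using $1-V^{2}/c^{2}=\gamma^{-2}$, and $\mathbf{E},\mathbf{B}$ are the $K$-fields at $A$.

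The second ingredient is purely kinematic and follows from the Lorentz transformation: how acceleration transforms between $K$ and $K'$ for a particle that is \emph{instantaneously at rest in $K'$}. Differentiating the relativistic velocity-addition formulas and using $dt'=\gamma\left(1-v_{x}V/c^{2}\right)dt$, then specializing to $\mathbf{v}^{e}=\mathbf{V}$ (so that $1-v_{x}^{e}V/c^{2}=\gamma^{-2}$ and $v_{y}^{e}=v_{z}^{e}=0$), gives
\begin{equation}
a'^{e}_{x}=\gamma^{3}a^{e}_{x},\qquad a'^{e}_{y}=\gamma^{2}a^{e}_{y},\qquad a'^{e}_{z}=\gamma^{2}a^{e}_{z}.
\end{equation}
Combining the two displays, $E'_{x}=a'^{e}_{x}=E_{x}$, $E'_{y}=a'^{e}_{y}=\gamma\left(E_{y}-VB_{z}\right)$, and $E'_{z}=a'^{e}_{z}=\gamma\left(E_{z}+VB_{y}\right)$, which is the assertion.

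I do not expect a deep obstacle; the one delicate point is the acceleration transformation, where a stray sign or power of $\gamma$ is easy to introduce and where the cross terms proportional to $a^{e}_{x}v^{e}_{y}$ and $a^{e}_{x}v^{e}_{z}$ in $a'^{e}_{y},a'^{e}_{z}$ must be seen to vanish precisely because the particle is at rest in $K'$. It is also worth noting explicitly that (D1') is a sound definition in $K'$ only because Theorem~\ref{thm:test=00003Dtest'} together with (E0) supplies enough (test particles)' at the required velocity $\mathbf{V}$, and that, as stipulated just before the theorem, all field values are evaluated at the one common event $A$.
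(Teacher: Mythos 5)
Your proposal is correct and follows essentially the same route as the paper's own proof: evaluate the Lorentz equation \eqref{eq:E1'} with $\pi=1$ at $\mathbf{v}^{e}=(V,0,0)$ to get $a_{x}^{e}=\gamma^{-3}E_{x}$, $a_{y}^{e}=\gamma^{-1}(E_{y}-VB_{z})$, $a_{z}^{e}=\gamma^{-1}(E_{z}+VB_{y})$, and then apply the acceleration transformation \eqref{eq:gyorsulas1}--\eqref{eq:gyorsulas3} for a particle instantaneously at rest in $K'$. Your additional remarks on why (D1') is sound (via Theorem~\ref{thm:test=00003Dtest'} and (E0)) match the paper's preamble to the theorem, and all the powers of $\gamma$ check out.
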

\begin{proof}
\noindent When the (test particle)' is at rest relative to $K'$,
it is moving with velocity $\mathbf{v}^{e}=\left(V,0,0\right)$ relative
to $K$. From \eqref{eq:E1'} (with $\pi=1$) we have
\begin{eqnarray}
a_{x}^{e} & = & \gamma^{-3}E_{x}\label{eq:TTgyors1}\\
a_{y}^{e} & = & \gamma^{-1}\left(E_{y}-VB_{z}\right)\\
a_{z}^{e} & = & \gamma^{-1}\left(E_{z}+VB_{y}\right)\label{eq:TTgyors3}
\end{eqnarray}
Applying \eqref{eq:gyorsulas1}--\eqref{eq:gyorsulas3}, we can calculate
the acceleration $\mathbf{a}'^{e}$ in $K'$, and, accordingly, we
find
\begin{eqnarray}
E'_{x} & = & a{}_{x}^{\prime e}=\gamma^{3}a_{x}^{e}=E_{x}\\
E'_{y} & = & a{}_{y}^{\prime e}=\gamma^{2}a_{y}^{e}=\gamma\left(E_{y}-VB_{z}\right)\\
E'_{z} & = & a{}_{z}^{\prime e}=\gamma^{2}a_{z}^{e}=\gamma\left(E_{z}+VB_{y}\right)
\end{eqnarray}
 \end{proof}
\begin{thm}
\label{thm:B'B}
\begin{eqnarray}
B'_{x} & = & B{}_{x}\label{eq:Bx'}\\
B'_{y} & = & \gamma\left(B_{y}+c^{-2}VE_{z}\right)\\
B'_{z} & = & \gamma\left(B_{z}-c^{-2}VE_{y}\right)\label{eq:Bz'}
\end{eqnarray}
\end{thm}
\begin{proof}
\noindent Consider for instance $B'_{x}$. By definition, 
\begin{equation}
B'_{x}=\left.\partial_{v'_{z}}U'{}_{y}^{\mathbf{r}',t'}\right|_{\mathbf{v}'=0}\label{eq:Bxdef}
\end{equation}
According to \eqref{eq:U-def-vesszo}, the value of $U'{}_{y}^{\mathbf{r}',t'}(\mathbf{v}')$
is equal to 
\begin{equation}
\left.a'{}_{y}^{e}\right|_{\mathbf{r}'^{e}(t')=\mathbf{r}';\,\mathbf{v}'^{e}(t')=\mathbf{v}'}
\end{equation}
that is, the $y$-component of the acceleration of a (test particle)'
$e$ in a situation in which $\mathbf{r}'^{e}(t')=\mathbf{r}'$ and
$\mathbf{v}'^{e}(t')=\mathbf{v}'$. Accordingly, in order to determine
the partial derivative \eqref{eq:Bxdef} we have to determine 
\begin{equation}
\left.\frac{d}{dw}\right|_{w=0}\left(\left.a'{}_{y}^{e}\right|_{\mathbf{r}'^{e}(t')=\mathbf{r}';\,\mathbf{v}'^{e}(t')=\left(0,0,w\right)}\right)
\end{equation}
Now, according to \eqref{eq:sebesseg1}, condition $\mathbf{v}'^{e}=\left(0,0,w\right)$
corresponds to 
\begin{equation}
\mathbf{v}^{e}=\left(V,0,\gamma^{-1}w\right)
\end{equation}
Substituting this velocity into \eqref{eq:E1'}, we have:
\begin{equation}
a_{y}^{e}=\sqrt{1-\frac{V^{2}+w{}^{2}\gamma^{-2}}{c^{2}}}\left(E_{y}+w\gamma^{-1}B_{x}-VB_{z}\right)\label{eq:hivatkozott-a}
\end{equation}
Applying \eqref{eq:gyorsulas4}, one finds:
\begin{eqnarray}
a{}_{y}^{\prime e} & = & \gamma^{2}a_{y}^{e}=\gamma^{2}\sqrt{1-\frac{V^{2}+w{}^{2}\gamma^{-2}}{c^{2}}}\left(E_{y}+w\gamma^{-1}B_{x}-VB_{z}\right)\nonumber \\
 & = & \frac{\gamma}{\gamma(w)}\left(E_{y}+w\gamma^{-1}B_{x}-VB_{z}\right)\label{eq:gyorsulasB}
\end{eqnarray}
Differentiating with respect to $w$ at $w=0$, we obtain
\begin{equation}
B'_{x}=B{}_{x}
\end{equation}
The other components can be obtained in the same way.\end{proof}
\begin{thm}
\label{thm:aramtrafo}~
\begin{eqnarray}
\varrho' & = & \gamma\left(\varrho-c^{-2}Vj_{x}\right)\label{eq:surusegtarfo1}\\
j'_{x} & = & \gamma\left(j_{x}-V\varrho\right)\\
j'_{y} & = & j_{y}\\
j'_{z} & = & j_{z}\label{eq:surusegtrafo4}
\end{eqnarray}
\end{thm}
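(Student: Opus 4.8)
The plan is to treat the statement as a direct computation. By Definition~(D3') --- equations \eqref{eq:ME1'}--\eqref{eq:ME2'} --- the primed source densities $\varrho'$ and $\mathbf{j}'$ are merely prescribed combinations of the spacetime derivatives of $\mathbf{E}'$ and $\mathbf{B}'$, and Theorems~\ref{thm:E'E} and \ref{thm:B'B} already express the primed fields at an event through the unprimed fields $\mathbf{E},\mathbf{B}$ at the same event. Hence no further physical input is needed --- in particular the homogeneous Maxwell equations \eqref{eq:ME3}--\eqref{eq:ME4} are not used --- and everything reduces to the chain rule for the Lorentz transformation.

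First I would record, from the Lorentz transformation (Appendix~1) with $\mathbf{V}=(V,0,0)$, that is, from the inverse relations $x=\gamma(x'+Vt')$, $y=y'$, $z=z'$, $t=\gamma(t'+c^{-2}Vx')$, the derivative identities $\partial_{x'}=\gamma(\partial_{x}+c^{-2}V\partial_{t})$, $\partial_{y'}=\partial_{y}$, $\partial_{z'}=\partial_{z}$, $\partial_{t'}=\gamma(\partial_{t}+V\partial_{x})$.

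Then comes the substitution, component by component. For $\varrho'=\partial_{x'}E'_{x}+\partial_{y'}E'_{y}+\partial_{z'}E'_{z}$, inserting $E'_{x}=E_{x}$, $E'_{y}=\gamma(E_{y}-VB_{z})$, $E'_{z}=\gamma(E_{z}+VB_{y})$ from Theorem~\ref{thm:E'E} and collecting terms gives $\gamma(\nabla\cdot\mathbf{E})+\gamma c^{-2}V\partial_{t}E_{x}-\gamma V(\partial_{y}B_{z}-\partial_{z}B_{y})$, which by Definition~(D3) and $(\nabla\times\mathbf{B})_{x}=\partial_{y}B_{z}-\partial_{z}B_{y}$ equals $\gamma\varrho-\gamma c^{-2}V\big(c^{2}(\nabla\times\mathbf{B})_{x}-\partial_{t}E_{x}\big)=\gamma(\varrho-c^{-2}Vj_{x})$. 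The component $j'_{x}=c^{2}(\partial_{y'}B'_{z}-\partial_{z'}B'_{y})-\partial_{t'}E'_{x}$ is of the same length: substituting $B'_{z}=\gamma(B_{z}-c^{-2}VE_{y})$, $B'_{y}=\gamma(B_{y}+c^{-2}VE_{z})$, $E'_{x}=E_{x}$ from Theorem~\ref{thm:B'B}, the $\partial_{x}E_{x},\partial_{y}E_{y},\partial_{z}E_{z}$ terms assemble into $-\gamma V\,\nabla\cdot\mathbf{E}=-\gamma V\varrho$ and the rest into $\gamma\big(c^{2}(\nabla\times\mathbf{B})_{x}-\partial_{t}E_{x}\big)=\gamma j_{x}$.

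The step that needs a little care --- and the one I expect to be the main obstacle --- is $j'_{y}$ (and, symmetrically, $j'_{z}$): here $\partial_{x'}$ acts on $B'_{z}$ and $\partial_{t'}$ on $E'_{y}$, so factors $\gamma^{2}$ appear and the expansion produces many cross terms. The plan is to write $j'_{y}=c^{2}\partial_{z}B_{x}-c^{2}\gamma^{2}(\partial_{x}+c^{-2}V\partial_{t})(B_{z}-c^{-2}VE_{y})-\gamma^{2}(\partial_{t}+V\partial_{x})(E_{y}-VB_{z})$ and verify that the $\partial_{x}E_{y}$ and $\partial_{t}B_{z}$ terms cancel outright, while the $\partial_{x}B_{z}$ and $\partial_{t}E_{y}$ terms collapse via $\gamma^{2}(1-c^{-2}V^{2})=1$, leaving exactly $c^{2}(\partial_{z}B_{x}-\partial_{x}B_{z})-\partial_{t}E_{y}=c^{2}(\nabla\times\mathbf{B})_{y}-\partial_{t}E_{y}=j_{y}$. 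The computation of $j'_{z}$ is the same after swapping $(y,z,B_{z},E_{y})\leftrightarrow(z,y,B_{y},E_{z})$ with the appropriate sign, giving $j'_{z}=j_{z}$; the four identities together are the theorem.
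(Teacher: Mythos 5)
Your proposal is correct and follows essentially the same route as the paper: substitute the field transformation rules of Theorems~\ref{thm:E'E} and \ref{thm:B'B} into the definitions \eqref{eq:ME1'}--\eqref{eq:ME2'}, apply the chain rule for the inverse Lorentz transformation, and identify $\varrho$ and $\mathbf{j}$ via \eqref{eq:ME1}--\eqref{eq:ME2}. The paper states this in one sentence where you carry out the algebra explicitly; your added observation that the cross terms in $j'_y,j'_z$ cancel identically, so that \eqref{eq:ME3}--\eqref{eq:ME4} are never invoked, is accurate and consistent with the paper's proof, which likewise cites only \eqref{eq:ME1}--\eqref{eq:ME2}.
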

\begin{proof}
In \eqref{eq:ME1'} and \eqref{eq:ME2'}, substituting $\mathbf{E}'$
and $\mathbf{B}'$ with the right-hand-sides of \eqref{eq:E'x}--\eqref{eq:E'z}
and \eqref{eq:Bx'}--\eqref{eq:Bz'}, $\mathbf{r}$ and $t$ with
the inverse of \eqref{eq:LT1}--\eqref{eq:LT4}, then differentiating
the composite function and taking into account \eqref{eq:ME1}--\eqref{eq:ME2},
we get \eqref{eq:surusegtarfo1}--\eqref{eq:surusegtrafo4}.\end{proof}
\begin{thm}
\label{thm:-E2'}A particle $b$ is charged point-particle of specific
passive electric charge\emph{ }$\pi{}^{b}$ and of active electric\emph{
}charge $\alpha{}^{b}$ if and only if it is a (charged point-particle)'
of (specific passive electric charge)'\emph{ }$\pi'^{b}$ and of (active
electric\emph{ }charge)' $\alpha'{}^{b}$, such that $\pi'^{b}=\pi{}^{b}$
and $\alpha'{}^{b}=\alpha^{b}$. \end{thm}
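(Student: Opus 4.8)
The plan is to check the two defining clauses of (D4) (equivalently (D4')) one at a time. Since the Lorentz boost taking $K$ to $K'$ is inverted by the boost with velocity $-\mathbf{V}$, and the transformation laws \eqref{eq:E'x}--\eqref{eq:E'z}, \eqref{eq:Bx'}--\eqref{eq:Bz'} and \eqref{eq:surusegtarfo1}--\eqref{eq:surusegtrafo4} are all invertible, it suffices to prove one direction: if $b$ is a charged point-particle with specific passive charge $\pi^{b}$ and active charge $\alpha^{b}$, then it is a (charged point-particle)' with $\pi'^{b}=\pi^{b}$ and $\alpha'^{b}=\alpha^{b}$. The converse follows by interchanging the roles of $K$ and $K'$.

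For clause~1 (the equation of motion), I would start from \eqref{eq:E1'} written in $K$ with the coefficient $\pi^{b}$ and substitute the field transformations of Theorems~\ref{thm:E'E} and \ref{thm:B'B}, the velocity transformation \eqref{eq:sebesseg1}, and the acceleration transformations \eqref{eq:gyorsulas1}--\eqref{eq:gyorsulas4}. A direct computation---the standard one displaying the Lorentz covariance of the relativistic Lorentz force equation---then turns this into \eqref{eq:E1'-vesszo} in $K'$ with the \emph{same} numerical coefficient. Every step being linear in the field strengths, $\pi^{b}$ merely rides through as an overall factor, so $b$ satisfies \eqref{eq:E1'-vesszo} with $\pi'^{b}=\pi^{b}$: this is clause~1 of (D4').

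For clause~2 (the source densities), fix a space-time region and suppose $b$ is the only particle whose worldline meets it---a frame-independent condition---so that \eqref{eq:(E3)1}--\eqref{eq:(E3)2} hold there with $\alpha^{b}$. The one extra ingredient needed is the transformation of the single-particle Dirac term. Viewing $\delta\left(\mathbf{r}-\mathbf{r}^{b}(t)\right)$ and $\delta\left(\mathbf{r}'-\mathbf{r}'^{b}(t')\right)$ as distributions supported on the common worldline and pairing each with a test function, one obtains an integral along the worldline with respect to $t$, respectively $t'$; since the Lorentz transformation has unit Jacobian while along the worldline $dt'/dt=\gamma\left(1-Vv_{x}^{b}/c^{2}\right)$, this gives, at any common event,
\[
\delta\left(\mathbf{r}-\mathbf{r}^{b}(t)\right)=\frac{1}{\gamma\left(1-Vv_{x}^{b}/c^{2}\right)}\,\delta\left(\mathbf{r}'-\mathbf{r}'^{b}(t')\right).
\]
Feeding \eqref{eq:(E3)1}--\eqref{eq:(E3)2} into the current transformation of Theorem~\ref{thm:aramtrafo} and inserting this identity, the factor $\gamma\left(1-Vv_{x}^{b}/c^{2}\right)$ cancels in $\varrho'$, leaving $\varrho'=\alpha^{b}\,\delta\left(\mathbf{r}'-\mathbf{r}'^{b}(t')\right)$; in the spatial components the leftover ratios are precisely the components of $\mathbf{v}'^{b}$ by the velocity-addition formula \eqref{eq:sebesseg1}, so $\mathbf{j}'=\alpha^{b}\,\delta\left(\mathbf{r}'-\mathbf{r}'^{b}(t')\right)\mathbf{v}'^{b}$. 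This is clause~2 of (D4') with $\alpha'^{b}=\alpha^{b}$.

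The main obstacle is the displayed Dirac-delta identity: one must make precise in what sense it holds and, above all, verify that the coefficient emerging in $\varrho'$ and $\mathbf{j}'$ is exactly $\alpha^{b}$ and not, say, $\gamma\alpha^{b}$. The key is that the relevant worldline factor is $\gamma\left(1-Vv_{x}^{b}/c^{2}\right)$, which encodes the relativity of simultaneity, rather than the bare time-dilation factor $\gamma$; this cancellation is exactly why the active charge is a frame-independent scalar although the active charge \emph{density} is not (contrast \eqref{eq:surusegtarfo1}). With both clauses in hand, the $K\leftrightarrow K'$ symmetry noted at the outset yields the stated ``if and only if''.
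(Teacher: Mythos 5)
Your proposal is correct and follows essentially the same route as the paper: clause~1 by substituting the field, velocity and acceleration transformations into the Lorentz equation so that $\pi^{b}$ rides through unchanged, and clause~2 by combining Theorem~\ref{thm:aramtrafo} with the Lorentz-transformation behaviour of the single-particle Dirac delta, where the factor $\gamma\left(1-Vv_{x}^{b}/c^{2}\right)$ cancels exactly as you say. The only differences are presentational: the paper verifies both clauses on representative special cases (velocity $(0,0,w)$ for the equation of motion, trajectory $(wt',0,0)$ for the sources), obtains your delta identity by explicitly rescaling the delta's argument rather than by a worldline-measure argument, and leaves the ``only if'' direction to the same $K\leftrightarrow K'$ symmetry you invoke.
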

\begin{proof}
First we prove \eqref{eq:E1'-vesszo}. For the sake of simplicity,
we will verify this in case of $\mathbf{v}{}^{\prime b}=\left(0,0,w\right)$.
We can use \eqref{eq:hivatkozott-a}: 
\begin{eqnarray}
a_{y}^{b} & = & \pi^{b}\sqrt{1-\frac{V^{2}+w{}^{2}\gamma^{-2}}{c^{2}}}\left(E_{y}+w\gamma^{-1}B_{x}-VB_{z}\right)
\end{eqnarray}
From \eqref{eq:gyorsulas4}, \eqref{eq:E'y}, \eqref{eq:Bx'}, and
\eqref{eq:Bz'} we have 
\begin{eqnarray}
a{}_{y}^{\prime b} & = & \pi^{b}\gamma(w)^{-1}\left(E'_{y}+wB'_{x}\right)\nonumber \\
 & = & \left[\pi^{b}\gamma\left(\mathbf{v}'^{b}\right)^{-1}\left(\mathbf{E}'-c^{-2}v'^{b}\left(\mathbf{v}'^{b}\mathbf{\cdot E}'\right)+\mathbf{v}'^{b}\times\mathbf{B}'\right)\right]_{y}\Biggl|_{\mathbf{v'}^{b}=\left(0,0,w\right)}
\end{eqnarray}
Similarly, 
\begin{eqnarray}
a{}_{x}^{\prime b} & = & \pi^{b}\gamma(w)^{-1}\left(E'_{x}-wB'_{y}\right)\nonumber \\
 & = & \left[\pi^{b}\gamma\left(\mathbf{v}'^{b}\right)^{-1}\left(\mathbf{E}'-c^{-2}\mathbf{v}'^{b}\left(\mathbf{v}'^{b}\mathbf{\cdot E}'\right)+\mathbf{v}'^{b}\times\mathbf{B}'\right)\right]_{x}\Biggl|_{\mathbf{v'}^{b}=\left(0,0,w\right)}\\
a{}_{z}^{\prime b} & = & \pi^{b}\gamma(w)^{-3}E'_{z}\nonumber \\
 & = & \left[\pi^{b}\gamma\left(\mathbf{v}'^{b}\right)^{-1}\left(\mathbf{E}'-c^{-2}\mathbf{v}'^{b}\left(\mathbf{v}'^{b}\mathbf{\cdot E}'\right)+\mathbf{v}'^{b}\times\mathbf{B}'\right)\right]_{z}\Biggl|_{\mathbf{v'}^{b}=\left(0,0,w\right)}
\end{eqnarray}
That is, \eqref{eq:E1'-vesszo} is satisfied, indeed.

In the second part, we show that \eqref{eq:(E3)1-vesszo}--\eqref{eq:(E3)2-vesszo}
are nothing but \eqref{eq:(E3)1}--\eqref{eq:(E3)2} expressed in
terms of $\mathbf{r}',t',\varrho'$ and $\mathbf{j}'$, with $\alpha^{'b}=\alpha^{b}$. 

It will be demonstrated for a particle of trajectory $\mathbf{r}'^{b}\left(t'\right)=\left(wt',0,0\right)$.
Applying \eqref{eq:sebesseg3}, \eqref{eq:(E3)1}--\eqref{eq:(E3)2}
have the following forms:
\begin{eqnarray}
\varrho{}\left(\mathbf{r},t\right) & = & \alpha^{b}\delta\left(x-\beta t\right)\delta\left(y\right)\delta\left(z\right)\\
\mathbf{j}{}\left(\mathbf{r},t\right) & = & \alpha^{b}\delta\left(x-\beta t\right)\delta\left(y\right)\delta\left(z\right)\left(\begin{array}{c}
\beta\\
0\\
0
\end{array}\right)
\end{eqnarray}
where $\beta=\frac{w+V}{1+c^{-2}wV}$. $\mathbf{r},t,\varrho$ and
$\mathbf{j}$ can be expressed with the primed quantities by applying
the inverse of \eqref{eq:LT1}--\eqref{eq:LT4} and \eqref{eq:surusegtarfo1}--\eqref{eq:surusegtrafo4}:
\begin{eqnarray}
\gamma\left(\varrho'{}\left(\mathbf{r}',t'\right)+c^{-2}Vj'{}_{x}\left(\mathbf{r}',t'\right)\right) & = & \alpha^{b}\delta\left(\gamma\left(x'+Vt'-\beta\left(t'+c^{-2}Vx'\right)\right)\right)\nonumber \\
 &  & \times\,\delta\left(y'\right)\delta\left(z'\right)\\
\gamma\left(j'{}_{x}\left(\mathbf{r}',t'\right)+V\varrho'{}\left(\mathbf{r}',t'\right)\right) & = & \alpha^{b}\delta\left(\gamma\left(x'+Vt'-\beta\left(t'+c^{-2}Vx'\right)\right)\right)\nonumber \\
 &  & \times\,\delta\left(y'\right)\delta\left(z'\right)\beta\\
j'{}_{y}\left(\mathbf{r}',t'\right) & = & 0\\
j'{}_{z}\left(\mathbf{r}',t'\right) & = & 0
\end{eqnarray}
One can solve this system of equations for $\varrho'$ and $j'_{x}$:
\begin{eqnarray}
\varrho'\left(\mathbf{r}',t'\right) & = & \alpha^{b}\delta\left(x'-wt'\right)\delta\left(y'\right)\delta\left(z'\right)\label{eq:suruseg-trafo_reszecske1}\\
\mathbf{j}'\left(\mathbf{r}',t'\right) & = & \alpha^{b}\delta\left(x'-wt'\right)\delta\left(y'\right)\delta\left(z'\right)\left(\begin{array}{c}
w\\
0\\
0
\end{array}\right)\label{eq:suruseg-trafo_reszecske2}
\end{eqnarray}
\end{proof}
\begin{thm}
~
\begin{eqnarray}
\nabla\cdot\mathbf{B}'\left(\mathbf{r}',t'\right) & = & 0\label{eq:ME3'}\\
\nabla\times\mathbf{E}'\left(\mathbf{r}',t'\right)+\partial_{t'}\mathbf{B}'\left(\mathbf{r}',t'\right) & = & 0\label{eq:ME4'}
\end{eqnarray}
\end{thm}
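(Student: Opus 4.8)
The plan is to show that the two homogeneous Maxwell equations hold in the primed frame by purely algebraic substitution, exactly as in the proof of Theorem~\ref{thm:aramtrafo}. The starting point is the empirical fact (E1), i.e. equations \eqref{eq:ME3}--\eqref{eq:ME4}, which hold identically in $K$. The primed field strengths $\mathbf{E}',\mathbf{B}'$ are, by Theorems~\ref{thm:E'E} and \ref{thm:B'B}, explicit linear combinations of the unprimed $\mathbf{E},\mathbf{B}$ (with constant coefficients $1,\gamma,\pm\gamma c^{-2}V$ etc.), and the primed coordinates $\mathbf{r}',t'$ are related to $\mathbf{r},t$ by the Lorentz transformation \eqref{eq:LT1}--\eqref{eq:LT4} of Appendix~1. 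So $\nabla\cdot\mathbf{B}'$ and $\nabla\times\mathbf{E}'+\partial_{t'}\mathbf{B}'$, when the primed fields are written out in terms of the unprimed fields and the primed derivatives $\partial_{x'},\partial_{y'},\partial_{z'},\partial_{t'}$ are converted to unprimed derivatives by the chain rule, become linear combinations of $\nabla\cdot\mathbf{B}$, the components of $\nabla\times\mathbf{E}+\partial_t\mathbf{B}$, $\nabla\cdot\mathbf{E}$, and the components of $c^2\nabla\times\mathbf{B}-\partial_t\mathbf{E}$. The claim is that the coefficients of the last two (the \emph{inhomogeneous} combinations) vanish identically, so that \eqref{eq:ME3}--\eqref{eq:ME4} alone force \eqref{eq:ME3'}--\eqref{eq:ME4'}.

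Concretely, I would proceed as follows. First, record the chain-rule dictionary: from the Lorentz transformation one gets $\partial_{x'}=\gamma(\partial_x+c^{-2}V\partial_t)$, $\partial_{t'}=\gamma(\partial_t+V\partial_x)$, $\partial_{y'}=\partial_y$, $\partial_{z'}=\partial_z$ (this is standard and can be cited from Appendix~1, or derived in one line by inverting \eqref{eq:LT1}--\eqref{eq:LT4}). Second, compute $\nabla\cdot\mathbf{B}'=\partial_{x'}B'_x+\partial_{y'}B'_y+\partial_{z'}B'_z$. Substituting $B'_x=B_x$, $B'_y=\gamma(B_y+c^{-2}VE_z)$, $B'_z=\gamma(B_z-c^{-2}VE_y)$ and the derivative dictionary, one collects terms; the $B$-only terms assemble into $\gamma(\nabla\cdot\mathbf{B})$ plus a piece proportional to $\partial_t B_x$, while the $E$-terms produce $\gamma c^{-2}V(\partial_z E_z-\ldots)$-type combinations. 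Grouping, the coefficient of $\nabla\cdot\mathbf{E}$ turns out to be zero and the remaining $E$- and $\partial_t B$-terms reorganize into $\gamma c^{-2}V\big(\nabla\times\mathbf{E}+\partial_t\mathbf{B}\big)_x$ together with $\gamma(\nabla\cdot\mathbf{B})$ — both zero by (E1). Third, do the analogous computation for each component of $\nabla\times\mathbf{E}'+\partial_{t'}\mathbf{B}'$; by the $x\leftrightarrow$(boost direction) structure the $x$-component will reduce to (a multiple of) $\nabla\cdot\mathbf{B}=0$ plus $(\nabla\times\mathbf{E}+\partial_t\mathbf{B})_x=0$, and the $y,z$-components to linear combinations of the $y,z$-components of $\nabla\times\mathbf{E}+\partial_t\mathbf{B}$ — with, crucially, the coefficients of the components of $c^2\nabla\times\mathbf{B}-\partial_t\mathbf{E}$ cancelling. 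It then follows that \eqref{eq:ME3'}--\eqref{eq:ME4'} hold whenever \eqref{eq:ME3}--\eqref{eq:ME4} do.

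Alternatively — and I would at least remark on this — one can avoid the component bookkeeping entirely by noting that Theorems~\ref{thm:E'E} and \ref{thm:B'B} are precisely the standard Lorentz transformation of the electromagnetic field tensor $F_{\mu\nu}$, and \eqref{eq:ME3}--\eqref{eq:ME4} are precisely the statement $\partial_{[\lambda}F_{\mu\nu]}=0$, which is manifestly covariant; hence \eqref{eq:ME3'}--\eqref{eq:ME4'} follow. But since the paper has deliberately committed to the $3+1$ formalism and wants everything expressed in one frame without presupposing covariance, the honest route is the direct substitution above, mirroring the proof of Theorem~\ref{thm:aramtrafo} verbatim (there one substitutes into the \emph{definitions} \eqref{eq:ME1'}--\eqref{eq:ME2'}; here one substitutes into \eqref{eq:ME3'}--\eqref{eq:ME4'} and uses the \emph{empirical facts} \eqref{eq:ME3}--\eqref{eq:ME4}).

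The main obstacle is nothing conceptual but purely the risk of algebraic slips: one must keep straight which derivatives are primed, apply the chain rule consistently, and — the one genuinely load-bearing point — verify that the coefficients multiplying $\nabla\cdot\mathbf{E}$ and the components of $c^2\nabla\times\mathbf{B}-\partial_t\mathbf{E}$ really do cancel rather than merely appearing. That cancellation is what makes the two homogeneous equations \eqref{eq:ME3'}--\eqref{eq:ME4'} a consequence of the two homogeneous equations \eqref{eq:ME3}--\eqref{eq:ME4} alone, with no input from the inhomogeneous pair; checking it for one component and invoking the boost-symmetry of the remaining structure is, I expect, the cleanest way to present it without grinding through all six scalar equations.
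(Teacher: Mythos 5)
Your proposal is correct and takes essentially the same approach as the paper: a direct chain-rule substitution using the Lorentz transformation \eqref{eq:LT1}--\eqref{eq:LT4} and the field transformation rules of Theorems~\ref{thm:E'E} and \ref{thm:B'B}, the load-bearing point being exactly the one you flag --- that the homogeneous combinations in one frame are linear combinations of the homogeneous combinations in the other frame alone, with no contribution from $\nabla\cdot\mathbf{E}$ or $c^{2}\nabla\times\mathbf{B}-\partial_{t}\mathbf{E}$. The only cosmetic difference is the direction of substitution: the paper rewrites \eqref{eq:ME3}--\eqref{eq:ME4} in the primed variables and inverts the resulting (manifestly invertible, since $V\neq c$) linear system in $\nabla\cdot\mathbf{B}'$ and $\left(\nabla\times\mathbf{E}'+\partial_{t'}\mathbf{B}'\right)$, whereas you expand the primed expressions in the unprimed variables --- the same computation read backwards.
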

\begin{proof}
Expressing \eqref{eq:ME3}--\eqref{eq:ME4} in terms of $\mathbf{r}',t',\mathbf{E}'$
and $\mathbf{B}'$ by means of \eqref{eq:LT1}--\eqref{eq:LT4}, \eqref{eq:E'x}--\eqref{eq:E'z}
and \eqref{eq:Bx'}--\eqref{eq:Bz'}, we have
\begin{eqnarray}
\nabla\cdot\mathbf{B}'-c^{-2}V\left(\nabla\times\mathbf{E}'+\partial_{t'}\mathbf{B}'\right)_{x} & = & 0\\
\left(\nabla\times\mathbf{E}'+\partial_{t'}\mathbf{B}'\right)_{x}-V\nabla\cdot\mathbf{B}' & = & 0\\
\left(\nabla\times\mathbf{E}'+\partial_{t'}\mathbf{B}'\right)_{y} & = & 0\\
\left(\nabla\times\mathbf{E}'+\partial_{t'}\mathbf{B}'\right)_{z} & = & 0
\end{eqnarray}
which is equivalent to \eqref{eq:ME3'}--\eqref{eq:ME4'}.\end{proof}
\begin{thm}
\label{thm:superpos'}If $b_{1}$, $b_{2}$,..., $b_{n}$ are the
only particles whose worldlines intersect a given space-time region
$\Lambda'$, then for all $\left(\mathbf{r}',t'\right)\in\Lambda'$
the (source densities)' are:
\begin{eqnarray}
\varrho'\left(\mathbf{r}',t'\right) & = & \sum_{i=1}^{n}\alpha^{b_{i}}\delta\left(\mathbf{r}'-\mathbf{r}'^{b_{i}}\left(t'\right)\right)\label{eq:add1'}\\
\mathbf{j}'\left(\mathbf{r}',t'\right) & = & \sum_{i=1}^{n}\alpha^{b_{i}}\delta\left(\mathbf{r}'-\mathbf{r}'^{b_{i}}\left(t'\right)\right)\mathbf{v}'^{b_{i}}\left(t'\right)\label{eq:add2'}
\end{eqnarray}
\end{thm}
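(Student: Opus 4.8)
The plan is to derive the statement from the empirical fact (E3) in $K$ together with the already-established transformation law of Theorem~\ref{thm:aramtrafo}, the actual work being reduced to the one-particle computation that appears in the proof of Theorem~\ref{thm:-E2'}. Observe first that whether the worldline of a given particle intersects the region $\Lambda'$ is a frame-independent fact, so $b_{1},\dots,b_{n}$ are precisely the particles whose worldlines meet $\Lambda'$ when that region is described in $K$. Hence (E3), read in $K$, gives, for every event of $\Lambda'$,
\begin{equation}
\varrho\left(\mathbf{r},t\right)=\sum_{i=1}^{n}\alpha^{b_{i}}\delta\left(\mathbf{r}-\mathbf{r}^{b_{i}}\left(t\right)\right),\qquad\mathbf{j}\left(\mathbf{r},t\right)=\sum_{i=1}^{n}\alpha^{b_{i}}\delta\left(\mathbf{r}-\mathbf{r}^{b_{i}}\left(t\right)\right)\mathbf{v}^{b_{i}}\left(t\right).
\end{equation}
The map $\left(\varrho,\mathbf{j}\right)\mapsto\left(\varrho',\mathbf{j}'\right)$ of Theorem~\ref{thm:aramtrafo} is linear and relates the values taken at one and the same event (once the $K$-arguments are re-expressed through the Lorentz transformation), so it distributes over this finite sum. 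Thus it suffices to show that a single term $\alpha^{b}\delta\left(\mathbf{r}-\mathbf{r}^{b}\left(t\right)\right)$, $\alpha^{b}\delta\left(\mathbf{r}-\mathbf{r}^{b}\left(t\right)\right)\mathbf{v}^{b}\left(t\right)$ is carried into $\alpha^{b}\delta\left(\mathbf{r}'-\mathbf{r}'^{b}\left(t'\right)\right)$, $\alpha^{b}\delta\left(\mathbf{r}'-\mathbf{r}'^{b}\left(t'\right)\right)\mathbf{v}'^{b}\left(t'\right)$ for an \emph{arbitrary} trajectory $\mathbf{r}^{b}\left(t\right)$; summing over $i$ then yields \eqref{eq:add1'}--\eqref{eq:add2'}.

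For this one-particle step I would run the second half of the proof of Theorem~\ref{thm:-E2'} with a general trajectory in place of the uniform one considered there. Writing $\mathbf{j}=\varrho\,\mathbf{v}^{b}$ and using the first component of Theorem~\ref{thm:aramtrafo},
\begin{equation}
\varrho'\left(\mathbf{r}',t'\right)=\gamma\left(\varrho-c^{-2}Vj_{x}\right)=\gamma\,\alpha^{b}\left(1-c^{-2}Vv_{x}^{b}\right)\delta\left(\mathbf{r}-\mathbf{r}^{b}\left(t\right)\right),
\end{equation}
the right-hand side being evaluated at the event whose $K'$-coordinates are $\left(\mathbf{r}',t'\right)$. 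Viewed as a distribution in $\mathbf{r}'$ at fixed $t'$, the argument $\mathbf{r}-\mathbf{r}^{b}\left(t\right)$ vanishes exactly when $\mathbf{r}'=\mathbf{r}'^{b}\left(t'\right)$, and the Jacobian matrix of $\mathbf{r}'\mapsto\mathbf{r}\left(\mathbf{r}',t'\right)-\mathbf{r}^{b}\left(t\left(\mathbf{r}',t'\right)\right)$ is triangular with determinant $\gamma\left(1-c^{-2}Vv_{x}^{b}\right)$, its only off-diagonal entries coming from the dependence of the $K$-time $t$ on $x'$. Since $\left|V\right|,\left|v_{x}^{b}\right|<c$ this determinant is positive, so the change-of-variables rule for the $\delta$-distribution gives $\delta\left(\mathbf{r}-\mathbf{r}^{b}\left(t\right)\right)=\left[\gamma\left(1-c^{-2}Vv_{x}^{b}\right)\right]^{-1}\delta\left(\mathbf{r}'-\mathbf{r}'^{b}\left(t'\right)\right)$; the prefactor cancels and $\varrho'\left(\mathbf{r}',t'\right)=\alpha^{b}\delta\left(\mathbf{r}'-\mathbf{r}'^{b}\left(t'\right)\right)$. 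The three components of $\mathbf{j}'$ are obtained the same way from the remaining components of Theorem~\ref{thm:aramtrafo}; the factors $\mathbf{v}^{b}$ surviving in the numerators combine, through the relativistic velocity-addition rule, into $\mathbf{v}'^{b}$, so that $\mathbf{j}'\left(\mathbf{r}',t'\right)=\alpha^{b}\delta\left(\mathbf{r}'-\mathbf{r}'^{b}\left(t'\right)\right)\mathbf{v}'^{b}\left(t'\right)$.

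The only genuine content is this one-particle $\delta$-function transformation, and the sole place requiring care is the change of variables: one must check that the relevant Jacobian is precisely the triangular matrix above and that $1-c^{-2}Vv_{x}^{b}$ stays positive, so that no spurious factor or sign is introduced. The curvature of the worldline is irrelevant here, since only the instantaneous first derivatives of $\mathbf{r}^{b}$ enter the Jacobian, exactly as in the uniform-motion case already written out for Theorem~\ref{thm:-E2'}; linearity of Theorem~\ref{thm:aramtrafo} then disposes of the superposition. It may also be noted, although the statement as formulated does not require it, that by Theorem~\ref{thm:-E2'} these coefficients $\alpha^{b_{i}}$ coincide with the primed active charges $\alpha'^{b_{i}}$.
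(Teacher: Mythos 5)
Your proposal is correct and follows essentially the same route as the paper's own proof: both start from (E3) in $K$, push it through the linear relation of Theorem~\ref{thm:aramtrafo}, and reduce everything to the Lorentz transformation of a single particle's $\delta$-sources, with linearity handling the superposition. The only difference is presentational---the paper writes the inverse-transformed equations as a linear system in $\varrho',\mathbf{j}'$ and solves it, citing the uniform-motion computation from Theorem~\ref{thm:-E2'}, whereas you apply the forward transformation componentwise and make explicit the triangular Jacobian $\gamma\left(1-c^{-2}Vv_{x}^{b}\right)$ for an arbitrary trajectory, which is a detail the paper leaves implicit.
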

\begin{proof}
Due to Theorem~\ref{thm:-E2'}, each (charged point-particle)' is
a charged point-particle with $\alpha^{'b}=\alpha^{b}$. Therefore,
we only need to prove that equations \eqref{eq:add1'}--\eqref{eq:add2'}
amount to \eqref{eq:add1}--\eqref{eq:add2} expressed in the primed
variables. On the left hand side of \eqref{eq:add1}--\eqref{eq:add2},
$\varrho$ and $\mathbf{j}$ can be expressed by means ?of \eqref{eq:surusegtarfo1}--\eqref{eq:surusegtrafo4};
on the right hand side, we take $\alpha^{'b}=\alpha^{b}$, and apply
the inverse of \eqref{eq:LT1}--\eqref{eq:LT4}, just as in the derivation
of \eqref{eq:suruseg-trafo_reszecske1}--\eqref{eq:suruseg-trafo_reszecske2}.
From the above, we obtain: 

\begin{eqnarray}
\varrho'\left(\mathbf{r}',t'\right)+c^{-2}Vj'{}_{x}\left(\mathbf{r}',t'\right) & = & \sum_{i=1}^{n}\alpha^{b_{i}}\delta\left(\mathbf{r}'-\mathbf{r}'^{b_{i}}\left(t'\right)\right)\nonumber \\
 &  & +c^{-2}V\sum_{i=1}^{n}\alpha^{b_{i}}\delta\left(\mathbf{r}'-\mathbf{r}'^{b_{i}}\left(t'\right)\right)v_{x}'^{b_{i}}\left(t'\right)\\
j'{}_{x}\left(\mathbf{r}',t'\right)+V\varrho'\left(\mathbf{r}',t'\right) & = & \sum_{i=1}^{n}\alpha^{b_{i}}\delta\left(\mathbf{r}'-\mathbf{r}'^{b_{i}}\left(t'\right)\right)v_{x}'^{b_{i}}\left(t'\right)\nonumber \\
 &  & +V\sum_{i=1}^{n}\alpha^{b_{i}}\delta\left(\mathbf{r}'-\mathbf{r}'^{b_{i}}\left(t'\right)\right)\\
j'{}_{y}\left(\mathbf{r}',t'\right) & = & \sum_{i=1}^{n}\alpha^{b_{i}}\delta\left(\mathbf{r}'-\mathbf{r}'^{b_{i}}\left(t'\right)\right)v_{y}'^{b_{i}}\left(t'\right)\\
j'{}_{z}\left(\mathbf{r}',t'\right) & = & \sum_{i=1}^{n}\alpha^{b_{i}}\delta\left(\mathbf{r}'-\mathbf{r}'^{b_{i}}\left(t'\right)\right)v_{z}'^{b_{i}}\left(t'\right)
\end{eqnarray}
Solving these linear equations for $\varrho'$ and $\mathbf{j}'$
we obtain \eqref{eq:add1'}--\eqref{eq:add2'}.
\end{proof}
Combining all the results we obtained in Theorems~\ref{thm:-E2'}--\ref{thm:superpos'},
we have
\begin{eqnarray}
\nabla\cdot\mathbf{E}'\left(\mathbf{r}',t'\right) & = & \sum_{i=1}^{n}\alpha'^{b_{i}}\delta\left(\mathbf{r}'-\mathbf{r}'^{b_{i}}\left(t'\right)\right)\label{eq:MLE1'}\\
c^{2}\nabla\times\mathbf{B}'\left(\mathbf{r}',t'\right)-\partial_{t'}\mathbf{E}'\left(\mathbf{r}',t'\right) & = & \sum_{i=1}^{n}\alpha'^{b_{i}}\delta\left(\mathbf{r}'-\mathbf{r}'^{b_{i}}\left(t'\right)\right)\mathbf{v}'^{b_{i}}\left(t'\right)\label{eq:MLE2'}\\
\nabla\cdot\mathbf{B}'\left(\mathbf{r}',t'\right) & = & 0\label{eq:MLE3'}\\
\nabla\times\mathbf{E}'\left(\mathbf{r}',t'\right)+\partial_{t'}\mathbf{B}'\left(\mathbf{r}',t'\right) & = & 0\label{eq:MLE4'}\\
\gamma\left(\mathbf{v}'^{b_{i}}\left(t'\right)\right)\mathbf{a}'^{b_{i}}(t') & = & \pi'^{b_{i}}\Biggl\{\mathbf{E}'\left(\mathbf{r}'^{b_{i}}\left(t'\right),t'\right)\nonumber \\
 &  & +\mathbf{v}'^{b_{i}}\left(t'\right)\times\mathbf{B}'\left(\mathbf{r}'^{b_{i}}\left(t'\right),t'\right)\nonumber \\
 &  & -\mathbf{v}'^{b_{i}}\left(t'\right)\frac{\mathbf{v}'^{b_{i}}\left(t'\right)\mathbf{\cdot E}'\left(\mathbf{r}'^{b_{i}}\left(t'\right),t'\right)}{c^{2}}\Biggr\}\,\,\,\,\,\,\,\,\label{eq:MLE5'}\\
 &  & \,\,\,\,\,\,\,\,\,\,\,\,\,\,\,\,\,\,\,\,\,\,\,\,\,\,\,\,\,\,\,\,\,\,\,\,\,(i=1,2,\ldots n)\nonumber 
\end{eqnarray}

\section{Are the textbook transformation rules true?\label{sec:Are-the-textbook} }

Our main concern in this paper was: On what grounds can the textbook
transformation rules for the electrodynamic quantities---hence the
hypothesis of covariance itself, from which the rules are routinely
derived---be considered as empirically verified facts of the physical
world? Now everything is at hand to declare that the textbook transformation
rules are in fact true, at least in the sense that they are derivable
from the laws of ED in a single frame of reference---\emph{without
the prior assumption of covariance}. For, Theorems~\ref{thm:E'E}
and \ref{thm:B'B} show the well-known transformation rules for the
field variables. What Theorem~\ref{thm:aramtrafo} asserts is nothing
but the well-known transformation rule for charge density and current
density. Finally, Theorem~\ref{thm:-E2'} shows that a particle's
electric specific passive charge, active charge and electric rest
mass are invariant Lorentz scalars. And, of course, these results
make it possible to use the well-known covariant formulation of electrodynamics.

At this point, having ascertained the transformation rules, we can
recognize that equations (\ref{eq:MLE1'})--(\ref{eq:MLE5'}) are
nothing but equations (\ref{eq:MLE1})--(\ref{eq:MLE5}) expressed
in the primed variables. At the same time, (\ref{eq:MLE1'})--(\ref{eq:MLE5'})
are manifestly of the same form as (\ref{eq:MLE1})--(\ref{eq:MLE5}).
Therefore, we proved that the Maxwell--Lorentz equations are \emph{indeed}
covariant against the \emph{real} transformations of the kinematic
and electrodynamic quantities. In fact, we proved more: 
\begin{itemize}
\item The Lorentz equation of motion \eqref{eq:MLE5} is covariant separately.
\item The four Maxwell equations (\ref{eq:MLE1})--(\ref{eq:MLE4}) constitute
a covariant set of equations, separately from \eqref{eq:MLE5}.
\item (\ref{eq:MLE1})--(\ref{eq:MLE2}) constitute a covariant set of equations,
separately.
\item (\ref{eq:MLE3})--(\ref{eq:MLE4}) constitute a covariant set of equations,
separately.
\end{itemize}
None of these statements follows automatically from the fact that
(\ref{eq:MLE1})--(\ref{eq:MLE5}) form a covariant system of equations
(Gömöri and Szabó~2011).

It is of interest to notice that all these results hinge on the \emph{relativistic}
version of the Lorentz equation, in particular, on the ``relativistic
mass-formula''. Without factor $\gamma\left(\mathbf{v}{}^{b}\right)$
in (\ref{eq:MLE5}), the proper transformation rules were different
and the Maxwell equations were not covariant---against the proper
transformations.

\section*{Acknowledgment}

The related research was partly supported by the OTKA Foundation,
No.~K 68043.

\section*{Appendix 1}

It is assumed that space and time coordinates are defined in all inertial
frames of reference; that is, in an arbitrary inertial frame $K$,
space tags $\mathbf{r}\left(A\right)=\left(x\left(A\right),y\left(A\right),z\left(A\right)\right)\in\mathbb{R}^{3}$
and a time tag $t\left(A\right)\in\mathbb{R}$ are assigned to every
event $A$ \textemdash{}by means of some empirical operations. We
also assume that the assignment is mutually unambiguous, such that
there is a one to one correspondence between the space and time tags
in arbitrary two inertial frames of reference $K$ and $K'$; that
is, the tags $\left(x'\left(A\right),y'\left(A\right),z'\left(A\right)\right)$
can be expressed by the tags $\left(x\left(A\right),y\left(A\right),z\left(A\right)\right)$,
and vice versa. The concrete form of this functional relation is an
empirical question. In this paper, we will take it for granted that
this functional relation is the well-known Lorentz transformation

Below we recall the most important formulas we use. For the sake of
simplicity, we assume the usual situation: $K'$ is moving along the
$x$-axis with velocity $\mathbf{V}=\left(V,0,0\right)$ relative
to $K$, the corresponding axises are parallel and the two origins
coincide at time $0$. Throughout the paper we use the following notations:
$\gamma(\ldots)=\left(1-\frac{(\ldots)^{2}}{c^{2}}\right)^{-\frac{1}{2}}$
and $\gamma=\gamma(V)$.

The connection between the space and time tags of an event $A$ in
$K$ and $K'$ is the following:

\begin{eqnarray}
x'\left(A\right) & = & \gamma\left(x\left(A\right)-Vt\left(A\right)\right)\label{eq:LT1}\\
y'\left(A\right) & = & y\left(A\right)\\
z'\left(A\right) & = & z\left(A\right)\\
t'\left(A\right) & = & \gamma\left(t\left(A\right)-c^{-2}Vx\left(A\right)\right)\label{eq:LT4}
\end{eqnarray}
Let $A$ be an event on the worldline of a particle. For the velocity
of the particle at $A$ we have:

\noindent 
\begin{eqnarray}
v_{x}'\left(A\right) & = & \frac{v_{x}\left(A\right)-V}{1-c^{-2}v_{x}\left(A\right)V}\label{eq:LT5}\\
v_{y}'\left(A\right) & = & \frac{\gamma^{-1}v_{y}\left(A\right)}{1-c^{-2}v_{x}\left(A\right)V}\\
v_{z}'\left(A\right) & = & \frac{\gamma^{-1}v_{z}\left(A\right)}{1-c^{-2}v_{x}\left(A\right)V}\label{eq:LT8}
\end{eqnarray}
We also use the inverse transformation in the following special case:
\begin{eqnarray}
\mathbf{v}'\left(A\right)=\left(v',0,0\right) & \mapsto & \mathbf{v}\left(A\right)=\left(\frac{v'+V}{1+c^{-2}v'V},0,0\right)\label{eq:sebesseg3}\\
\mathbf{v}'\left(A\right)=\left(0,0,v'\right) & \mapsto & \mathbf{v}\left(A\right)=\left(V,0,\gamma v'\right)\label{eq:sebesseg1}
\end{eqnarray}

\noindent The transformation rule of acceleration is much more complex,
but we need it only for $\mathbf{v}'\left(A\right)=\left(0,0,0\right)$:

\noindent 
\begin{eqnarray}
a'_{x}\left(A\right) & = & \gamma^{3}a_{x}\left(A\right)\label{eq:gyorsulas1}\\
a'_{y}\left(A\right) & = & \gamma^{2}a_{y}\left(A\right)\label{eq:gyorsulas2}\\
a'_{z}\left(A\right) & = & \gamma^{2}a_{z}\left(A\right)\label{eq:gyorsulas3}
\end{eqnarray}
We will also need the $y$-component of acceleration in case of $\mathbf{v}'\left(A\right)=\left(0,0,v'\right)$:
\begin{equation}
a'_{y}\left(A\right)=\gamma^{2}a_{y}\left(A\right)\label{eq:gyorsulas4}
\end{equation}

\section*{Appendix 2}

There are two major versions of the textbook derivation of the transformation
rules for electrodynamic quantities from the hypothesis of covariance.
The first version follows Einstein's 1905 paper: 
\begin{lyxlist}{00.00.0000}
\item [{(1a)}] The transformation rules of electric and magnetic field
strengths are derived from the presumption of the covariance of the
homogeneous (with no sources) Maxwell equations.
\item [{(1b)}] The transformation rules of source densities are derived
from the transformations of the field variables. 
\item [{(1c)}] From the transformation rules of charge and current densities,
it is derived that electric charge is an invariant scalar. 
\end{lyxlist}
The second version is this: 
\begin{lyxlist}{00.00.0000}
\item [{(2a)}] The transformation rules of the charge and current densities
are derived from some additional \emph{assumptions}; typically from
one of the followings:

\begin{lyxlist}{00.00.0000}
\item [{(2a1)}] the invariance of electric charge (Jackson 1999, pp. 553--558)
\item [{(2a2)}] the current density is of form $\varrho\mathbf{u}(\mathbf{r},t)$,
where $\mathbf{u}(\mathbf{r},t)$ is a velocity field (Tolman 1949,
p. 85; M\o ller 1955, p. 140). 
\end{lyxlist}
\item [{(2b)}] The transformation of the field strengths are derived from
the transformation of $\varrho$ and $\mathbf{j}$ and from the presumption
of the covariance of the inhomogeneous Maxwell equations.
\end{lyxlist}
Unfortunately, with the only exception of (1b), none of the above
steps is completely correct. Without entering into the details, let
us mention that (2a1) and (2a2) both involve some further empirical
information about the world, which does not follow from the simple
assumption of covariance. Even in case of (1a) we must have the tacit
assumption that zero charge and current densities go to zero charge
and current densities during the transformation---otherwise the covariance
of the homogeneous Maxwell equations would not follow from the assumed
covariance of the Maxwell equations.

One encounters the next major difficulty in both (1a) and (2b): neither
the homogeneous nor the inhomogeneous Maxwell equations determine
the transformation rules of the field variables uniquely; $\mathbf{E}'$
and $\mathbf{B}'$ are only determined by $\mathbf{E}$ and $\mathbf{B}$
up to an arbitrary solution of the homogeneous equations (see also
Huang~2008). 

Finally, let us mention a conceptual confusion that seems to be routinely
overlooked in (1c), (2a1) and (2a2). There is no such thing as a simple
relation between the scalar invariance of charge and the transformation
of charge and current densities, as is usually claimed. For example,
it is meaningless to say that 
\begin{equation}
Q=\varrho\Delta W=Q'=\varrho'\Delta W'
\end{equation}
where $\Delta W$ denotes a volume element, and
\begin{equation}
\Delta W'=\gamma\Delta W
\end{equation}
Whose charge is $Q$, which remains invariant? Whose volume is $\Delta W$
and in what sense is that volume Lorentz contracted? In another form,
in (2a2), whose velocity is $\mathbf{u}(\mathbf{r},t)$?

\section*{References}
\begin{lyxlist}{00.00.0000}
\item [{Arthur~J.~W.~(2011):}] Understanding Geometric Algebra for Electromagnetic
Theory (IEEE Press Series on Electromagnetic Wave Theory), Wiley-IEEE
Press, Hoboken, NJ.
\item [{Bell,~J.S.~(1987):}] How to teach special relativity, in \emph{Speakable
and unspeakable in quantum mechanics}. Cambridge, Cambridge University
Press.
\item [{Einstein,~A~(1905):}] \foreignlanguage{ngerman}{Zur Elektrodynamik
bewegter Körper, \emph{Annalen der Physik}} \textbf{17}, 891. (On
the Electrodynamics of Moving Bodies, in H. A. Lorentz et al.,\emph{
The principle of relativity: a collection of original memoirs on the
special and general theory of relativity. }London, Methuen and Company
1923)
\item [{Frisch,~M.~(2005):}] \emph{Inconsistency, Asymmetry, and Non-Locality},
Oxford, Oxford University Press. 
\item [{Gr\o n,~\O .~and~V\o yenli,~K.~(1999):}] On the Foundation
of the Principle of Relativity, \emph{Foundations of Physics} \textbf{29},
pp. 1695-1733.
\item [{Gömöri,~M.~and~L.E.~Szabó~(2011):}] On the formal statement
of the special principle of relativity, preprint, (http://philsci-archive.pitt.edu/id/eprint/8783).
\item [{Hestenes~D.~(1966):}] \emph{Space-Time Algebra,} New York, Gordon
\& Breach.
\item [{Hestenes,~D\emph{.~}(2003):}] Spacetime physics with geometric
algebra, \emph{Am. J. Phys.} \textbf{71}, 691, DOI: 10.1119/1.1571836. 
\item [{Huang,~Young-Sea~(1993):}] Has the Lorentz-covariant electromagnetic
force law been directly tested experimentally?, \emph{Foundations
of Physics Letters} \textbf{6}, 257.\textbf{ }
\item [{Huang,~Young-Sea~(2008):}] Does the manifestly covariant equation
$\partial_{\alpha}\mathbf{A}^{\alpha}=0$ imply that $\mathbf{A}^{\alpha}$
is a four-vector?, \emph{Canadian J. Physics} \textbf{86}, pp. 699\textendash{}701
DOI: 10.1139/P08-012.
\item [{Huang,~Young-Sea~(2009):}] A new perspective on relativistic
transformation for Maxwell\textquoteright{}s equations of electrodynamics,
\emph{Physica Scripta} \textbf{79}, 055001 (5pp) DOI: 10.1088/0031-8949/79/05/055001.
\item [{Ivezi\'{c},~T.~(2001):}] ``True Transformations Relativity''
and Electrodynamics, \emph{Foundations of Physics} \textbf{31}, 1139.
\item [{Ivezi\'{c},~T.~(2003):}] The Proof that the Standard Transformations
of E and B Are not the Lorentz Transformations, \emph{Foundations
of Physics} \textbf{33}, 1339.
\item [{Jackson,~J.D.~(1999):}] \emph{Classical Electrodynamics (Third
edition).} Hoboken (NJ), John Wiley \& Sons.
\item [{Jammer,~M.~(2000):}] \emph{Concepts of Mass in Contemporary Physics
and Philosophy. }Princeton, Princeton University Press.
\item [{M\o ller~C.~(1955):}] \emph{The Theory of Relativity.} Oxford,
Clarendon Press. 
\item [{Muller,~F.~(2007):}] Inconsistency in Classical Electrodynamics?,
\emph{Philosophy of Science} \textbf{74}, pp. 253-277. 
\item [{Norton,~J.D.~(1993):}] General Covariance and the Foundations
of General Relativity: Eight Decades of Dispute, \emph{Reports on
Progress in Physics} \textbf{56}, 791.
\item [{Reichenbach,~H.~(1965):}] \emph{The Theory of Relativity and
A Priori Knowledge. }Berkeley and Los Angeles, University of California
Press.
\item [{Rohrlich,~F.~(2007):}] \emph{Classical Charged Particles. }Singapore,
World Scientific\emph{.}
\item [{Szabó,~L.E.~(2004):}] On the meaning of Lorentz covariance, \emph{Foundations
of Physics Letters} \textbf{17}, pp. 479--496.
\item [{Tolman,~R.C.~(1949):}] \emph{Relativity, Thermodynamics and Cosmology.}
Oxford, Clarendon Press.\end{lyxlist}

\end{document}